\documentclass[journal,onecolumn]{IEEEtran}

\usepackage{enumerate}
\usepackage[shortlabels]{enumitem}

\usepackage{amssymb}
\usepackage{amsfonts}
\usepackage{mathrsfs}
\usepackage{amsmath} 
\usepackage{amsthm}
\usepackage{mathtools}

\newcommand{\ceil}[1]{ {\lceil#1\rceil}}
\newcommand{\floor}[1]{ {\lfloor#1\rfloor}}
\newcommand{\bi}{{\{0,1\}}}

\usepackage{algorithm}
\usepackage[noend]{algpseudocode}
\algrenewcommand\algorithmiccomment[1]{\textcolor{lightgray}{\hfill // #1}}

\usepackage{pgfplots}
\pgfplotsset{compat=1.18}
\usepackage{graphicx}

\usepackage[style=ieee,
            backend=biber,
            doi=false,
            maxbibnames=99,
            ]{biblatex}
 
\usepackage{subcaption}
\usepackage{wrapfig}

\theoremstyle{definition}
\newtheorem{theorem}{Theorem}[section]
\newtheorem{lemma}[theorem]{Lemma}
\newtheorem{corollary}[theorem]{Corollary}
\newtheorem{remark}[theorem]{Remark}

\newtheorem{definition}[theorem]{Definition}

\numberwithin{comment}{section}

\renewcommand{\thetheorem}{\arabic{section}.\arabic{theorem}}
\usepackage{xcolor}
\usepackage{soul} 

\usepackage{hyperref}
\hypersetup{
 colorlinks,
 linkcolor={blue!100!black},
 citecolor={blue!100!black},
 urlcolor={blue!80!black} 
}

\newcommand{\bbF}{\mathbb{F}}

\newcommand{\bfb}{\mathbf{b}}
\newcommand{\bfc}{\mathbf{c}}
\newcommand{\bfd}{\mathbf{d}}

\newcommand{\bff}{\mathbf{f}}

\newcommand{\bfm}{\mathbf{m}}
\newcommand{\bfn}{\mathbf{n}}

\newcommand{\bfs}{\mathbf{s}}

\newcommand{\bfu}{\mathbf{u}}
\newcommand{\bfv}{\mathbf{v}}
\newcommand{\bfw}{\mathbf{w}}
\newcommand{\bfx}{\mathbf{x}}
\newcommand{\bfy}{\mathbf{y}}
\newcommand{\bfz}{\mathbf{z}}

\newcommand{\cB}{\mathcal{B}}
\newcommand{\cC}{\mathcal{C}}

\newcommand{\cF}{\mathcal{F}}
\newcommand{\cG}{\mathcal{G}}

\newcommand{\cI}{\mathcal{I}}

\newcommand{\cP}{\mathcal{P}}

\newcommand{\cS}{\mathcal{S}}

\newcommand{\cW}{\mathcal{W}}

\newcommand{\brc}{\mathrm{BRC}}
\newcommand{\muc}{\mathrm{MU}}
\newcommand{\rep}{\mathrm{rep}}

\begin{document}

\title{Optimal Break-Resilient Codes}

\author{Canran Wang,~\IEEEmembership{Member,~IEEE}%
}

\maketitle

\begin{abstract}

Break-resilient codes enable reliable communication in the presence of an omniscient adversary that may split a transmitted message at arbitrary boundaries between consecutive symbols, while the receiver observes only an unordered multiset of the resulting fragments.
For binary codewords of length~$n$ subject to at most~$t$ breaks, the best known explicit construction has redundancy~$O(t\log_2 n\log_2\log_2\log_2 n)$, whereas the information-theoretic lower bound is~$\Omega(t\log_2(n/t))$.
In this paper, we extend the binary break model to any fixed finite field~$\bbF_q$ and establish a redundancy lower bound of~$\Omega(t\log_q(n/t))$.
We then give an explicit construction of~$q$-ary break-resilient codes with redundancy~$O(t\log_q n)$ when~$t\leq n^{1-\varepsilon}$ for a fixed constant~$\varepsilon\in(0,1)$, matching the information-theoretic lower bound up to a constant factor.
The key idea is to compute a short algebraic fingerprint of the message, which enables the decoder to reject incorrect assemblies of the received fragments.
\end{abstract}
\begin{IEEEkeywords}
    Error-correcting codes, sequence reconstruction, DNA storage.
\end{IEEEkeywords}

\section{Introduction}

An~$(n,t)$-break-resilient code ($(n,t)$-BRC) is a collection of length-$n$ codewords, each of which can be uniquely recovered when an adversary makes at most~$t$ breaks between adjacent symbols, and the decoder receives only the resulting multiset of at most~$t+1$ fragments.
Wang et al.~\cite{wang2026break} introduced this model and established an~$\Omega(t\log_2 (n/t))$ lower bound over the binary alphabet, together with an explicit binary construction of redundancy~$O(t\log_2 n\log_2\log_2 n)$.
The deterministic block-edit codes of Cheng et al.~\cite{cheng2019block} also imply an explicit binary BRC construction: an unordered collection of at most~$t+1$ fragments can be viewed as a permutation of at most~$t+1$ contiguous blocks, which can be realized by~$O(t)$ block transpositions.
Their construction therefore gives redundancy~$O(t\log_2 n\log_2\log_2\log_2 n)$.

Yet, both constructions assume a binary alphabet and leave a gap to the~$\Omega(t\log_2 (n/t))$ lower bound.
The reason is structural.
Both constructions achieve synchronization by decomposing the recovery task into multiple dependent stages, and the synchronization information required at each stage is protected separately, rather than being protected once globally.
Hence, although the adversary has only a ``budget'' of~$t$ breaks, each stage must be prepared for the possibility that all~$t$ breaks affect the information needed at that stage.
In this sense, the same adversarial budget is effectively paid for repeatedly across the recovery procedure, leading to the extra logarithmic factors in the redundancy.

In this paper, we generalize the binary break model to any fixed finite field~$\bbF_q$, and develop an alternative code construction that avoids the aforementioned staged decoding routine.
In particular, we first extend the information-theoretic lower bound of redundancy from~$\Omega(t\log_2(n/t))$ over~$\bi$ to~$\Omega(t\log_q(n/t))$ over~$\bbF_q$.
Then, we present an~$(n,t)$-BRC with redundancy~$O(t\log_q n)$, which achieves the lower bound throughout the regime~$t\leq n^{1-\varepsilon}$ for a fixed~$\varepsilon\in(0,1)$.

At a high level, the encoder computes a short fingerprint of the message, protects it using an MDS code, and places the encoded header at the start of the codeword.
The MDS protection enables the decoder to recover the fingerprint value from the fragmented codeword.
The recovered fingerprint is then used to identify the unique valid ordering of the received fragments.
An ordering is retained only if its concatenation begins with the MDS encoding of the recovered fingerprint, and the suffix following this prefix is called a~\emph{candidate}.
The fingerprinting function is carefully designed so that, among all candidates, the transmitted message is the unique one whose fingerprint equals the recovered fingerprint value.
The decoder can therefore identify the transmitted message without ambiguity.

\subsection*{Related Work}

Coding for unordered fragments has been studied under several models, many of which are motivated by applications in DNA storage.
In the \emph{sliced-channel} model, a codeword is partitioned at prescribed, evenly spaced locations, producing an unordered collection of equal-length substrings~\cite{sima2021coding,sima2024robust,lenz2019coding}.
The~\emph{torn-paper channel} instead places cuts according to a probabilistic process, resulting in unordered fragments of random lengths~\cite{shomorony2021torn,shomorony2020communicating,liu2026improved}.
An adversarial counterpart was studied in~\cite{bar2023adversarial} under the restriction that fragment lengths lie between prescribed lower and upper bounds.

In contrast, break-resilient codes~\cite{wang2026break} constrain only the number of adversarial breaks and place no restrictions on the lengths of the resulting fragments.
The model has been applied to robust information reconstruction in 3D-printed objects for forensic applications~\cite{wang2025secure}.
A subsequent extension, known as~$(t,s)$-break-resilient codes~\cite{wang2026breakloss}, additionally tolerates the complete loss of any subset of fragments whose aggregate length is at most~$s$.

\subsection*{Our Contributions}

Our contributions are twofold.
First, we extend the information-theoretic lower-bound argument for binary break-resilient codes to an arbitrary fixed finite field~$\bbF_q$, establishing a redundancy lower bound of~$\Omega(t\log_q(n/t))$ for~$q$-ary codes.
Second, for every fixed~$\varepsilon\in(0,1)$ and~$t\leq n^{1-\varepsilon}$, we give a deterministic construction of~$q$-ary break-resilient codes with redundancy~$O(t\log_q n)$.
This construction is order-optimal with the explicit redundancy
\begin{equation*}
   (6t+8)\ceil{\log_q k}+(3t+4)\cdot (\ceil{\log_q t}+11)+1 = O(t\log_q n).
\end{equation*}
The algebraic fingerprint underlying the construction can be viewed as an instance of the one-round cover-free recoloring framework of~\cite{li2026constructing}, specialized to the confusion graph induced by the break model.

\section{Model and Notation}\label{section:notation}

We use standard notation for strings.
For a positive integer~$n$ and an alphabet~$\Sigma_q$, a length-$n$ string over~$\Sigma_q$ is denoted by~$\bfc=(c_1,\ldots,c_n)\in\Sigma_q^n$.
We write~$|\bfc|$ for the length of~$\bfc$ and use~$\circ$ to denote string concatenation.
For~$n\in\mathbb N$, let~$[n]\triangleq\{1,2,\ldots,n\}$.
For an interval~$I=\{i,i+1,\ldots,j\}\subseteq[n]$, we denote by~$\bfx_I$ the substring~$\bfx_I=(x_i,x_{i+1},\ldots,x_j)$.
Throughout this paper, all logarithms are base~$2$ unless otherwise stated.

We next formally define the~\emph{$t$-break channel}.
Fix integers~$n\ge 1$ and~$0\le t\le n-1$,
and a~\emph{break pattern} is a set
\begin{equation*}
  \cP=\{b_1,\ldots,b_\tau\}\subseteq[n-1],
  0\le\tau\le t,
  b_1<\cdots<b_\tau .
\end{equation*}
Set~$b_0=0$ and~$b_{\tau+1}=n$, and define
\begin{equation*}
  I_{\cP,i}
  =
  \{b_i+1,b_i+2,\ldots,b_{i+1}\},
  \qquad
  i\in\{0,\ldots,\tau\}.
\end{equation*}
Thus,~$I_{\cP,0},\ldots,I_{\cP,\tau}$ are the consecutive nonempty intervals induced by the break pattern~$\cP$.

For an input string~$\bfc\in\Sigma_q^n$, the channel output corresponding to~$\cP$ is the unordered multiset
\begin{equation*}
  \mathsf{Frag}_{\cP}(\bfc)
  \triangleq
  \{\{
    \bfc_{I_{\cP,0}},
    \bfc_{I_{\cP,1}},
    \ldots,
    \bfc_{I_{\cP,\tau}}
  \}\},
\end{equation*}
where~$\{\{\cdot\}\}$ denotes a multiset.
The set of all possible outputs of the~$t$-break channel on input
$\bfc$ is
\begin{equation*}
  \cF_t(\bfc)
  \triangleq
  \left\{
    \mathsf{Frag}_{\cP}(\bfc):
    \cP\subseteq[n-1],\ |\cP|\le t
  \right\}.
\end{equation*}
We further define the channel output space as
\begin{equation*}
  \textstyle\cF_{n,t}\triangleq\bigcup_{\bfc\in\Sigma_q^n}\cF_t(\bfc).
\end{equation*}

A~$q$-ary~$(n,t)$-BRC carrying~$k$ information symbols consists of an encoder and a decoder
\begin{equation*}
  \mathsf{Enc}:\Sigma_q^k\to\Sigma_q^n,
  \qquad
  \mathsf{Dec}:\cF_{n,t}\to\Sigma_q^k,
\end{equation*}
such that, for every~$\bfx\in\Sigma_q^k$ and every break pattern
$\cP\subseteq[n-1]$ with~$|\cP|\le t$,
\begin{equation}\label{eq:brc-correctness}
  \mathsf{Dec}
  \left(
    \mathsf{Frag}_{\cP}(\mathsf{Enc}(\bfx))
  \right)
  =
  \bfx .
\end{equation}
The associated codebook is
\begin{equation*}
  \cC_\brc=\mathsf{Enc}(\Sigma_q^k)\subseteq\Sigma_q^n.
\end{equation*}

\section{Converse}
We extend the redundancy lower bound established for binary break-resilient codes in~\cite[Theorem 3.6]{wang2026break} to BRCs over an arbitrary fixed alphabet~$\Sigma_q$.
Our proof inherits the concept of $t$-confusability from the binary case, but does not rely on the sphere-packing argument.

\begin{definition}[$t$-confusability]\label{def:confusability}
  Two words~$\bfx,\bfy\in\Sigma_q^n$ are~$t$-confusable if there exist break patterns~$\cP_0,\cP_1\subset[n-1]$ with~$|\cP_0|=|\cP_1|\leq t$ such that
  \begin{equation*}
    \mathsf{Frag}_{\cP_0}(\bfx)=\mathsf{Frag}_{\cP_1}(\bfy).
  \end{equation*}
\end{definition}

Clearly, an~$(n,t)$-break-resilient code cannot contain~$t$-confusable codewords.

\begin{definition}[$\bfn$-equivalence]

Let~$\bfn= (n_1,\ldots,n_s)$ be a sequence of positive integers satisfying~$n_1+\cdots+n_s=n$.
For any~$\bfx,\bfy\in\Sigma_q^n$, write their unique block decompositions induced by~$\bfn$ as
\begin{equation*}
  \bfx = \bfx^{(1)}\circ\bfx^{(2)}\circ\cdots\circ\bfx^{(s)},~\mbox{and}~
  \bfy = \bfy^{(1)}\circ\bfy^{(2)}\circ\cdots\circ\bfy^{(s)},
  ~\mbox{where~$|\bfx^{(i)}|=|\bfy^{(i)}|=n_i$ for every~$i\in[s]$}.
\end{equation*}
We say~$\bfx$ and~$\bfy$ are~$\bfn$-equivalent, i.e.,~$\bfx \sim_\bfn\bfy$, if~$\bfx^{(i)}$ is a cyclic shift of~$\bfy^{(i)}$ for every~$i\in[s]$.
\end{definition}

The next lemma connects the previous two definitions.  
\begin{lemma}\label{lemma:connect-confuse-eq}
  For a positive integer~$s$ satisfying~$2s-1\leq t$, and any~$\bfx,\bfy\in\Sigma_q^n$,
  \begin{equation*}
    \bfx \sim_\bfn\bfy\Longrightarrow~\mbox{$\bfx$ and~$\bfy$ are~$t$-confusable}.
  \end{equation*}
\end{lemma}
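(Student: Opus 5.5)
The plan is to write down explicit break patterns $\cP_0$ for $\bfx$ and $\cP_1$ for $\bfy$ that cut both words into the same multiset of pieces. We use $\sim_\bfn$ to split every block into two parts that appear in $\bfx$ and $\bfy$ in swapped order.

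\textbf{Step 1 (split each block).} Fix $i\in[s]$. Since $\bfx^{(i)}$ is a cyclic shift of $\bfy^{(i)}$, there are strings $\bfu_i,\bfv_i$ with
\begin{equation*}
\bfx^{(i)}=\bfu_i\circ\bfv_i,\qquad \bfy^{(i)}=\bfv_i\circ\bfu_i,\qquad |\bfu_i|+|\bfv_i|=n_i .
\end{equation*}
If the shift is trivial, we choose $\bfu_i$ empty. Otherwise we choose $0<|\bfu_i|<n_i$. Let $J\subseteq[s]$ be the set of indices with both $\bfu_i$ and $\bfv_i$ nonempty.

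\textbf{Step 2 (define the break patterns).} Let $N_i=n_1+\cdots+n_i$. The pattern $\cP_0$ consists of
\begin{itemize}
\item the $s-1$ block boundaries $N_1,\ldots,N_{s-1}$;
\item the internal breaks $N_{i-1}+|\bfu_i|$ for $i\in J$.
\end{itemize}
The pattern $\cP_1$ consists of
\begin{itemize}
\item the same $s-1$ block boundaries;
\item the internal breaks $N_{i-1}+|\bfv_i|$ for $i\in J$.
\end{itemize}
Each internal break lies strictly inside its block, so it differs from every boundary and from every other internal break. All positions lie in $[n-1]$. Therefore
\begin{equation*}
|\cP_0|=|\cP_1|=s-1+|J|\le 2s-1\le t .
\end{equation*}

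\textbf{Step 3 (compare the multisets).} Under $\cP_0$, block $i$ contributes the fragments $\bfu_i,\bfv_i$ when $i\in J$, and the single fragment $\bfv_i=\bfx^{(i)}$ otherwise. Under $\cP_1$, block $i$ contributes $\bfv_i,\bfu_i$ when $i\in J$, and $\bfv_i=\bfy^{(i)}$ otherwise. Taking the union over $i$ gives $\mathsf{Frag}_{\cP_0}(\bfx)=\mathsf{Frag}_{\cP_1}(\bfy)$. Together with Step 2, this is exactly Definition~\ref{def:confusability}.

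\textbf{Main obstacle.} The only delicate point is the requirement $|\cP_0|=|\cP_1|$ together with the need for every fragment to be nonempty. Both are handled by placing an internal break only in nontrivially shifted blocks, and by counting these blocks identically in $\bfx$ and $\bfy$. The degenerate case $\bfx=\bfy$ with $s=1$ gives $\cP_0=\cP_1=\emptyset$ and is consistent with the argument.
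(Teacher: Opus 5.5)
Your proof is correct and takes the same approach as the paper: cut both words at the $s-1$ block boundaries, then cut each block between the two pieces $\bfu_i,\bfv_i$ that realize the cyclic shift, so both sides give the same multiset of $\bfu_i$'s and $\bfv_i$'s. Your handling of trivially shifted blocks is a genuine improvement: you place no internal break there, so every fragment stays nonempty and $|\cP_0|=|\cP_1|=s-1+|J|$, whereas the paper's flat count of $2s-1$ breaks passes over this case.
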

\begin{proof}
  Consider words~$\bfx,\bfy\in\Sigma_q^n$ where~$\bfx\sim_\bfn\bfy$.
  Since~$\bfx^{(i)}$ is a cyclic shift of~$\bfy^{(i)}$ for every~$i\in[s]$, they have decompositions
  \begin{equation*}
    \bfx^{(i)}=\bfu^{(i)}\circ\bfv^{(i)}~\mbox{and}~\bfy^{(i)}=\bfv^{(i)}\circ\bfu^{(i)}. 
  \end{equation*}
  The adversary may perform the following two-stage procedures on~$\bfx,\bfy$, respectively.
  The adversary first breaks~$\bfx$ and~$\bfy$ into blocks based on~$\bfn$.
  Then, for blocks~$\bfx^{(i)}$ and~$\bfy^{(i)}$, the adversary further breaks them into~$\bfu^{(i)}$ and~$\bfv^{(i)}$.
  The two procedures yield the same multiset of fragments
  \begin{equation*}
    \{\{  \bfu^{(1)}, \bfv^{(1)},\ldots, \bfu^{(s)}, \bfv^{(s)}\}\},
  \end{equation*}
  and the number of breaks utilized is
  \begin{equation*}
    2s-1 \leq t.
  \end{equation*}
  Therefore,~$\bfx$ and~$\bfy$ are~$t$-confusable.
\end{proof}

The above lemma allows us to bound the size of a break-resilient code from above by the simple pigeonhole principle.

\begin{lemma}
For each positive integer $m$, let~$\cG_m$ be the group of cyclic shifts acting on~$\Sigma_q^m$, and denote the number of orbits of this action by
\begin{equation*}
  N_q(m)\triangleq|\Sigma_q^m/\cG_m |
\end{equation*}
For any positive integers~$n_1,\ldots,n_s$ satisfying~$2s-1\leq t$ and~$n_1+\cdots+ n_s=n$, every~$(n,t)$-BRC~$\cC_\brc$ satisfies
\begin{equation*}
  |\cC_\brc|\leq\prod_{i=1}^s N_q(n_i).
\end{equation*}
\end{lemma}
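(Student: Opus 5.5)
The plan is a pigeonhole argument built on Lemma~\ref{lemma:connect-confuse-eq}. Fix $\bfn=(n_1,\ldots,n_s)$ with $2s-1\le t$ and $\sum_i n_i=n$.

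First, I will check that $\sim_\bfn$ is an equivalence relation on $\Sigma_q^n$. Being a cyclic shift of another string is reflexive, symmetric and transitive, since the cyclic shifts of $\Sigma_q^{m}$ form the group $\cG_m$. Applying this blockwise gives the claim for $\sim_\bfn$.

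Next, I will count the equivalence classes of $\sim_\bfn$. The map
\begin{equation*}
\bfx\mapsto\left(\cG_{n_1}\bfx^{(1)},\ldots,\cG_{n_s}\bfx^{(s)}\right)
\end{equation*}
sends each word to the tuple of orbits of its blocks. Two words are $\bfn$-equivalent exactly when they have the same image. The map is also surjective onto $\prod_{i=1}^s \Sigma_q^{n_i}/\cG_{n_i}$, since any choice of orbit representatives can be concatenated into a word. Hence the number of classes is $\prod_{i=1}^s N_q(n_i)$.

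Then I will show that every class contains at most one codeword. Suppose two distinct codewords $\bfc\neq\bfc'$ of $\cC_\brc$ lie in the same class. By Lemma~\ref{lemma:connect-confuse-eq}, they are $t$-confusable. So there are break patterns $\cP_0,\cP_1$ with $|\cP_i|\le t$ and $\mathsf{Frag}_{\cP_0}(\bfc)=\mathsf{Frag}_{\cP_1}(\bfc')$. Writing $\bfc=\mathsf{Enc}(\bfx)$ and $\bfc'=\mathsf{Enc}(\bfx')$ with $\bfx\ne\bfx'$, the correctness condition~\eqref{eq:brc-correctness} forces $\bfx=\mathsf{Dec}(\cdot)=\bfx'$ on this common fragment multiset, a contradiction.

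If $\mathsf{Enc}$ were not injective, decoding would already fail, so $\mathsf{Enc}$ is injective. Therefore distinct codewords come from distinct messages, and the pigeonhole principle gives $|\cC_\brc|\le\prod_{i=1}^s N_q(n_i)$.

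The main subtlety I expect is degenerate blocks in Lemma~\ref{lemma:connect-confuse-eq}. When the cyclic shift is trivial, $\bfu^{(i)}$ or $\bfv^{(i)}$ may be empty, so the corresponding breaks are not actually placed and fewer than $2s-1$ breaks are used. This only lowers the count, and the two sides still use the same number of breaks because each block is cut identically. So the confusability conclusion stands; I will note this briefly and otherwise rely on the lemma as stated.
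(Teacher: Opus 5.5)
Your proposal is correct and follows the paper's proof. Both use pigeonhole over the $\prod_{i} N_q(n_i)$ tuples of block orbits, then Lemma~\ref{lemma:connect-confuse-eq} together with~\eqref{eq:brc-correctness} to exclude two codewords in one class, and your remark on degenerate blocks (trivial shift or $n_i=1$, where both sides use the same smaller number of breaks) is a valid detail the paper glosses over. The injectivity detour is unnecessary, since $\mathsf{Enc}(\bfx)\neq\mathsf{Enc}(\bfx')$ already forces $\bfx\neq\bfx'$.
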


\begin{proof}
  Assume, for the sake of contradiction, that there are more than~$\prod_{i=1}^sN_q(n_i)$ codewords in~$\cC_\brc$.
  Then, there must exist two distinct codewords~$\bfx,\bfy\in\cC_\brc$ such that each pair of corresponding blocks lies in the same orbit, i.e.,~$\bfx^{(i)}$ is a cyclic shift of~$\bfy^{(i)}$ for every~$i\in[s]$.
  This makes~$\bfx\sim_\bfn\bfy$, where~$\bfn=(n_1,\ldots,n_s)$, and hence they are~$t$-confusable by Lemma~\ref{lemma:connect-confuse-eq}, which cannot coexist in a break-resilient code~$\cC_\brc$, a contradiction.
\end{proof}
 
We next provide the lower bound on the redundancy of $q$-ary break-resilient codes.
\begin{theorem}\label{theorem:converse}
  For every fixed integer~$q\geq2$, every~$(n,t)$-BRC over~$\Sigma_q$ has redundancy
  \begin{equation*}
  \Omega(t\log_q\frac{n}{t}).
  \end{equation*}
\end{theorem}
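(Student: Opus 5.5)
We plan to apply the preceding pigeonhole lemma with a balanced block decomposition and then bound the orbit counts $N_q(m)$ from above.

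\textbf{Choice of blocks.} Set $s=\floor{(t+1)/2}$, so that $2s-1\le t$, and assume $t\ge 1$ (for $t=0$ the claimed bound is trivial). Choose block lengths $n_1,\ldots,n_s$, each equal to $\floor{n/s}$ or $\ceil{n/s}$, with sum $n$. The lemma then gives
\begin{equation*}
|\cC_\brc|\le\prod_{i=1}^s N_q(n_i).
\end{equation*}

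\textbf{Bounding $N_q(m)$.} Every orbit of the cyclic shift group $\cG_m$ has size at most $m$, so this bound points the wrong way. We instead need an upper bound of the form $N_q(m)\le 2q^m/m$, say. We will use Burnside's lemma:
\begin{equation*}
N_q(m)=\frac1m\sum_{d\mid m}\varphi(d)\,q^{m/d}.
\end{equation*}
The $d=1$ term contributes $q^m/m$. The remaining terms contribute at most
\begin{equation*}
\frac1m\sum_{d\mid m,\,d\ge2}\varphi(d)\,q^{m/2}\le q^{m/2}.
\end{equation*}
Hence $N_q(m)\le q^m/m+q^{m/2}\le 2q^m/m$ whenever $q^{m/2}\ge m$, which holds for all $m\ge1$ because $q\ge2$ (check small $m$ directly). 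An alternative is the elementary count that at most $q^{m/2}\cdot m$ words have a nontrivial period dividing $m$, while all other orbits have size exactly $m$.

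\textbf{Conclusion.} Taking $\log_q$ of the product gives
\begin{equation*}
\log_q|\cC_\brc|\le n-\sum_{i=1}^s\log_q(n_i/2).
\end{equation*}
Hence the redundancy $n-\log_q|\cC_\brc|$ is at least
\begin{equation*}
\sum_{i=1}^s\log_q\frac{n_i}{2}\ge s\log_q\frac{\floor{n/s}}{2}.
\end{equation*}
Since $s\ge t/2$ (for $t\ge1$) and $n/s\ge n/t$, this is $\Omega(t\log_q(n/t))$.

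The edge cases need attention. When $n/t$ is bounded, the factor $\log_q(n_i/2)$ could be nonpositive, but then the claimed bound $\Omega(t\log_q(n/t))$ is $O(t)$ or vacuous, so we may assume $n\ge Ct$ for a large constant $C$. This makes $\floor{n/s}/2\ge \sqrt{n/t}$, for instance.

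The main obstacle is the clean upper bound on $N_q(m)$ uniformly in $m$, i.e., controlling the periodic words. The Burnside estimate above handles it, and the rest is bookkeeping with floors and constants.
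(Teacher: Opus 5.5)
Your proof is essentially the paper's argument: the same $s=\lceil t/2\rceil=\floor{(t+1)/2}$ balanced blocks, the pigeonhole lemma, and a Burnside estimate $N_q(m)\le 2q^m/m$, giving redundancy at least $s\log_q(\tfrac12\floor{n/s})$. There is one small slip: your claim that $q^{m/2}\ge m$ for all $m\ge 1$ fails at $q=2$, $m=3$, since $2^{3/2}<3$. The fix is to keep the factor $\frac{m-1}{m}$ you dropped (because $\sum_{d\mid m,\,d\ge2}\varphi(d)=m-1$); then you only need $q^{m/2}\ge m-1$, which holds for all $m$ and is exactly the paper's bound $1+(m-1)2^{-m/2}\le 2$. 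In any case only large $m\approx n/s$ matter for the asymptotic bound, so this does not affect the result.
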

\begin{proof}
  Let~$g_\ell\in\cG_m$ denote right cyclic shift by~$\ell\in[0,m-1]$, which is a permutation with~$d=\gcd(m,\ell)$ cycles.
  Hence,~$g_\ell$ fixes elements in~$\Sigma_q^m$ with the same symbol in each cycle, and their number is~$q^{\gcd(m,\ell)}$.
  Applying Burnside’s lemma, we have
  \begin{align}
    N_q(m) &= \frac{1}{m}\sum_{\ell=0}^{m-1}q^{\gcd(m,\ell)}
    =\frac{1}{m}(q^m+\sum_{\ell=1}^{m-1}q^{\gcd(m,\ell)})
    \overset{(a)}{\leq} \frac{1}{m}(q^m+(m-1)\cdot q^{\frac{m}{2}})
    =\frac{q^m}{m}(1+ (m-1)\cdot q^{-\frac{m}{2}})\nonumber\\
    &{\leq}\frac{q^m}{m}(1+ (m-1)\cdot 2^{-\frac{m}{2}})
    \overset{(b)}{\leq}  \frac{2q^m}{m},\label{eq:bound-on-nqm}
  \end{align}
  where~$(a)$ is due to the observation that~$\gcd(m,\ell)$ is less than or equal to~$\frac{m}{2}$.
  For~$(b)$, consider the function
  \begin{equation*}
      f(z)=1+(z-1)2^{-z/2},
  \end{equation*}
  which has maximum value~$1+\frac{\sqrt{2}}{e\ln 2}<2$.
  
  Let~$s=\ceil{t/2}$ and let~$n_1,\ldots,n_s$ be balanced, i.e.,
  \begin{equation}\label{eq:balanced}
    n_1+\cdots+n_s=n,~\mbox{where}~n_i\in\{\floor{\frac{n}{s}},\ceil{\frac{n}{s}}\},
  \end{equation}
  The redundancy is then
  \begin{align*}
    n-\log_q|\cC_\brc| &\geq  \sum_{i=1}^s n_i- \log_q\prod_{i=1}^sN_q(n_i)
    =\sum_{i=1}^s \log_q \frac{q^{n_i}}{N_q(n_i)}
    \overset{\eqref{eq:bound-on-nqm}}{\geq}\sum_{i=1}^s \log_q \frac{n_i}{2}
    \overset{\eqref{eq:balanced}}{\geq} s \cdot \log_q (\frac{1}{2}\floor{\frac{n}{s}})\\
    &= \ceil{\frac{t}{2}}\cdot\log_q(\frac{1}{2}\floor{\frac{n}{\ceil{t/2}}})=\Omega(t\log_q\frac{n}{t}).\qedhere
  \end{align*}
\end{proof}

\section{Code Construction}\label{section:construction}
In this section, we present our code construction for a fixed finite field~$\bbF_q$, where~$q\geq 2$ is a prime power.
Throughout, we assume that~$t\ge 1$ and~$k>t$, and define
\begin{equation}\label{eq:define-big-var}
  T= t+1, K= 2T+2,~\mbox{and}~N=K+t.
\end{equation}

\subsection{Preliminaries}\label{section:preliminary}
Recall that the adversary may make at most~$t$ breaks in the codeword, after which the decoder receives the resulting unordered multiset of at most~$t+1$ fragments.
Since the order of these fragments is lost, decoding must account for all strings that can be obtained by concatenating the fragments.
We formalize this collection of possible assemblies as follows.

\begin{definition}[$t$-break ball]\label{def:ball}
For an integer~$t\ge 0$ and a~$q$-ary string~$\bfx\in\Sigma_q^n$, the~$t$-break ball of~$\bfx$ is defined as
\begin{equation*}
    \cB_t(\bfx)=
  \left\{
    \bfx_{I_{\pi(0)}}\circ \bfx_{I_{\pi(1)}}\circ \cdots \circ \bfx_{I_{\pi(\tau)}}\mid
    0\le \tau\le t, (I_0,\ldots,I_\tau) \in\cI_\tau(n),\pi\in\operatorname{Sym}(\{0,\ldots,\tau\})
  \right\},
\end{equation*}
where~$\cI_\tau(n)$ denotes the set of all ordered tuples~$(I_0,\ldots,I_\tau)$ of consecutive nonempty intervals obtained by breaking~$[n]$ in~$\tau$ positions. 
\end{definition}

\begin{lemma}\label{lemma:ball-size}
  For every~$\bfx\in\Sigma_q^n$, the size of its~$t$-break ball is bounded by~$n^t\cdot(t+1)^2$.
\end{lemma}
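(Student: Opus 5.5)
The plan is to bound $|\cB_t(\bfx)|$ by counting the parameters that generate its elements, with no attempt to account for coincidences among the resulting strings. By Definition~\ref{def:ball}, every element of $\cB_t(\bfx)$ is determined by a triple $(\tau,(I_0,\ldots,I_\tau),\pi)$, where $0\le\tau\le t$, $(I_0,\ldots,I_\tau)\in\cI_\tau(n)$ and $\pi\in\operatorname{Sym}(\{0,\ldots,\tau\})$. Hence $|\cB_t(\bfx)|$ is at most the number of such triples:
\begin{equation*}
  |\cB_t(\bfx)|\le\sum_{\tau=0}^{t}|\cI_\tau(n)|\cdot(\tau+1)!.
\end{equation*}

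The first step is to evaluate $|\cI_\tau(n)|$. A tuple of $\tau+1$ consecutive nonempty intervals covering $[n]$ corresponds bijectively to a break pattern, that is, a $\tau$-subset of $[n-1]$. Therefore $|\cI_\tau(n)|=\binom{n-1}{\tau}$, which is $0$ when $\tau>n-1$; those terms only help the bound.

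Next I bound each summand:
\begin{equation*}
  \binom{n-1}{\tau}(\tau+1)!=(n-1)(n-2)\cdots(n-\tau)\cdot(\tau+1)\le n^\tau(\tau+1).
\end{equation*}
Since $n\ge1$ and $\tau\le t$, we have $n^\tau\le n^t$ and $\tau+1\le t+1$, so every summand is at most $n^t(t+1)$. There are $t+1$ values of $\tau$, and summing gives
\begin{equation*}
  |\cB_t(\bfx)|\le (t+1)\cdot n^t(t+1)=n^t(t+1)^2 .
\end{equation*}

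There is no real obstacle in this argument. The only points needing care are that the map from triples to strings need not be injective, which is harmless for an upper bound, and the degenerate case $\tau>n-1$, where $\cI_\tau(n)$ is empty and the corresponding terms contribute nothing.
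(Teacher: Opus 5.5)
Your proposal is correct and is essentially the paper's proof: both bound $|\cB_t(\bfx)|$ by the parameter count $\sum_{\tau=0}^{t}\binom{n-1}{\tau}(\tau+1)!$, bound each term by $n^\tau(\tau+1)$ (the paper writes $(n-1)^\tau(\tau+1)$), and sum over the $t+1$ values of $\tau$. Your remarks on non-injectivity and the degenerate case $\tau>n-1$ just spell out points the paper's one-line argument leaves implicit.
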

\begin{proof}
  By Definition~\ref{def:ball}, the ball~$\cB_t(\bfx)$ is the set of all~$q$-ary strings obtainable by breaking~$\bfx$ in at most~$t$ positions and permuting the resulting substrings.
  Hence,
  \begin{equation*} 
  |\cB_t(\bfx)|\leq \sum_{\tau=0}^{t}\binom{n-1}{\tau}(\tau+1)! \leq \sum_{\tau=0}^{t} {(n-1)^\tau} (\tau+1)\leq n^t\cdot (t+1) ^2 .\qedhere
\end{equation*}
\end{proof}

Next, we briefly review the definition of~\emph{mutually uncorrelated} (MU) codes that will be used in the code construction.
A code~$\cC_{\muc}$ is \emph{mutually uncorrelated} (MU) if, for any two (not necessarily distinct) codewords~$\bfx,\bfy\in\cC_{\muc}$, no nonempty proper prefix~of~$\bfx$ is equal to a suffix of~$\bfy$.
Equivalently, no two codewords in~$\cC_{\muc}$ can overlap with each other at a nontrivial shift.

For a string~$\bfu$, let~$\mathrm{zr}(\bfu)$ denote the length of its longest run of zeros.
Define a collection of words
\begin{equation}\label{eq:define-c-mu}
  \cC_{\muc}
  =
  \left\{
    0^{\ceil{\log_q k}+2}\circ 1\circ\bfu\circ 1
    \mid
    \bfu\in\bbF_q^{\ceil{\log_q t}+4},~\mathrm{zr}(\bfu) < \ceil{\log_q k}+2
  \right\}.
\end{equation}
The next lemma verifies that this construction yields sufficiently many mutually uncorrelated words.

\begin{lemma}
  The set $\cC_\muc$ is a mutually uncorrelated code with code size~$|\cC_\muc|\geq N$.
\end{lemma}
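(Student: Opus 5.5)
Write $a=\ceil{\log_q k}+2$ and $m=\ceil{\log_q t}+4$, so every codeword has the form $0^a\circ 1\circ\bfu\circ 1$ with $|\bfu|=m$ and $\mathrm{zr}(\bfu)<a$. The proof has two independent parts: the mutually uncorrelated property, which is a short combinatorial argument, and the size bound, which is a counting estimate.

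\emph{Mutual uncorrelation.} Take $\bfx,\bfy\in\cC_\muc$ and suppose a nonempty proper prefix of $\bfx$ of length $\ell$, with $1\le \ell< n'\triangleq a+m+2$, equals the suffix of $\bfy$ of the same length.

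If $\ell\le a$, this prefix is $0^\ell$. But $\bfy$ ends in the symbol $1$, so its suffix of length $\ell$ ends in $1$, a contradiction.

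If $\ell>a$, the prefix begins with the run $0^a\circ 1$. Hence $\bfy$ contains $0^a$ starting at position $n'-\ell+1\ge 2$ and ending at position $n'-\ell+a\le n'-1$. We locate this window inside $\bfy=0^a\circ1\circ\bfu\circ1$:
\begin{itemize}
\item It cannot contain position $a+1$ or position $n'$, since both hold the symbol $1$.
\item Since it starts at position $\ge 2$, it cannot lie entirely inside the leading block $0^a$, because that block has length only $a$ and starts at position $1$.
\item Any window starting in $[2,a]$ has length $a$ and therefore reaches position $a+1$, which is excluded.
\end{itemize}
So the window must lie entirely within $\bfu$, i.e., within positions $a+2,\ldots,a+m+1$. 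That makes $0^a$ a substring of $\bfu$, contradicting $\mathrm{zr}(\bfu)<a$.

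\emph{Size bound.} We need to show the number of $\bfu\in\bbF_q^m$ with no run of $a$ zeros is at least $N=3t+4$. Count the complement with a union bound: a zero run of length $a$ can start at one of at most $m$ positions, and each such event has probability $q^{-a}$. So the number of bad words is at most $m\,q^{m-a}$, and
\begin{equation*}
|\cC_\muc|\ \ge\ q^m\bigl(1-m q^{-a}\bigr).
\end{equation*}
If $a>m$ there are no bad words at all, and the bound holds trivially.

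Now use $q^m\ge 16\,q^{\log_q t}=16t$ and $q^{a}\ge q^2k$. Since $k>t$, we get $q^a\ge q^2 t$, and $m\le \log_q t+5$. Therefore
\begin{equation*}
\frac{m}{q^a}\ \le\ \frac{\log_q t+5}{q^2 t}\ \le\ \frac{\log_2 t+5}{4t}.
\end{equation*}
Substituting, $q^m(1-mq^{-a})\ge 16t-4(\log_2 t+5)$. Comparing this with $3t+4$ amounts to the small-$t$ check $13t\ge 4\log_2 t+24$, which holds for $t\ge 2$.

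\emph{The case $t=1$.} Here $N=7$ and $m=4$.
\begin{itemize}
\item If $a\ge 5$, the condition on $\bfu$ is vacuous, so $|\cC_\muc|=q^4\ge16$.
\item If $a\le4$, then since $a\ge3$ we have $a\in\{3,4\}$, and only a handful of words of length $4$ contain a run of $a$ zeros. For $q=2$ and $a=3$ there are exactly $3$ such words ($0000$, $0001$, $1000$), leaving $13\ge 7$.
\end{itemize}

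The main obstacle is the constant bookkeeping at small $t$ and $q=2$. I would handle it by keeping the bound $q^m\ge 16t$ separate from the bound $mq^{-a}\le(\log_2 t+5)/(4t)$, and treating $t=1$ directly as above.
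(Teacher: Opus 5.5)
Your mutual-uncorrelation argument is the same as the paper's: overlaps of length at most $a$ are blocked by the trailing $1$, and longer overlaps force $0^a$ to sit inside $\bfu$. Your bookkeeping of where the zero window can lie is correct.

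\textbf{The gap is in the size bound.} Your final comparison asserts $13t\ge 4\log_2 t+24$ for all $t\ge 2$, but at $t=2$ this reads $26\ge 28$, which is false. Concretely, your chain of estimates yields only $16\cdot 2-4(1+5)=8$ at $t=2$, while $N=10$. So $t=2$ is covered neither by the general estimate (which only works for $t\ge 3$) nor by your separate $t=1$ analysis. The loss comes mainly from two relaxations. First, $m\le\log_q t+5$ is loose: at $t=2$ one has $m=5$ exactly, not $6$. Second, $q^a\ge q^2 t$ throws away $k\ge t+1$. Separately, replacing $q^m$ by its lower bound $16t$ inside $q^m(1-mq^{-a})$ silently uses $1-mq^{-a}\ge 0$, which you should state.

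\textbf{The repair is easy.} One option is to treat $t=2$ by hand: $m=\ceil{\log_q 2}+4=5$ and $a\ge 3$, so $|\cC_\muc|\ge q^5-5q^2=q^2(q^3-5)\ge 12\ge 10$. A better option is to keep the information you discarded. Since $q^a\ge q^2k\ge 4(t+1)$, and $m=\ceil{\log_q t}+4\le\ceil{\log_2 t}+4\le t+3$ (as $\ceil{\log_2 t}\le t-1$), you get $mq^{-a}\le\frac{t+3}{4(t+1)}\le\frac12$. Hence $|\cC_\muc|\ge q^m/2\ge 8t\ge 3t+4$ for every $t\ge 1$, with no case analysis, including $t=1$.

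With that fix, your union-bound count is a genuinely different and self-contained route to the size bound. The paper instead exhibits an injection from $\bbF_q^{\ceil{\log_q t}+3}$ into the admissible $\bfu$'s via the Levy--Yaakobi RLL encoder. That route imports an external constrained-coding result and requires checking $\ceil{\log_q(\ceil{\log_q t}+3)}<\ceil{\log_q k}+2$. Yours needs only elementary counting and arrives at the same $8t$.
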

\begin{proof}
  Assume for the sake of contradiction that there are (not necessarily distinct) words~$\bfm_1,\bfm_2\in\cC_\muc$ such that a proper suffix of~$\bfm_1$ and a proper prefix of~$\bfm_2$ overlap.
  Let~$d$ be the overlap length such that~$0<d<n_\muc$.
  \begin{enumerate}
    \item If~$d\leq \ceil{\log_q k}+2$, then the prefix is~$0^d$, whereas the suffix ends in~$1$, contradiction.
    \item If~$d>\ceil{\log_q k}+2$, then the suffix must begin with~$0^{\ceil{\log_q k}+2}$.
    However, no proper suffix of~$\bfm_1$ begins with~$0^{\ceil{\log_q k}+2}$: the initial run
    of~$\ceil{\log_q k}+2$ zeros starts only at the first coordinate, and the interior word of~$\bfm_1$ contains no run of~$\ceil{\log_q k}+2$ zeros, contradiction.
  \end{enumerate}


  It remains to prove the cardinality bound.
  Note that, since~$k>t,q\geq 2$, and~$\ceil{\log_q k}\geq 1$, we have
  \begin{equation*}
    q^{\ceil{\log_q k}+1} \geq  \ceil{\log_q k}+3 \geq  \ceil{\log_q t}+3.
  \end{equation*}
  Taking logarithms on both sides, we have
  \begin{equation*}
    \ceil{\log_q k}+1\geq \log_q({\ceil{\log_q t}+3})\Longrightarrow\ceil{\log_q k}+1\geq \ceil{\log_q(\ceil{\log_q t}+3)}
    \Longrightarrow \ceil{\log_q k}+2 >\ceil{\log_q (\ceil{\log_q t}+3)}.
  \end{equation*}
  Using the uniquely decodable run-length limited (RLL) code in \cite[Alg.~1]{levy2018mutually}, a string of length~$a$ can be encoded to a run-length limited string of length~$a+1$ that is free of zero runs longer than~$\ceil{\log_q a}$.
  Hence, for every~$\bfv\in\bbF_q^{\ceil{\log_q t}+3}$, we can generate a constrained string~$\bfu\in\bbF_q^{\ceil{\log_q t}+4}$ such that~$\mathrm{zr}(\bfu)\leq\ceil{\log_q(\ceil{\log_q t}+3)}<\ceil{\log_q k}+2$.
  Therefore,
  \begin{equation*}
    |\cC_\muc| \geq q^{\ceil{\log_q t}+3}\geq q^3\cdot t\geq 8t\geq 3t+4 =N.\qedhere
  \end{equation*}
\end{proof}

The above lemma allows us to define~\emph{markers}, which are~$N$ distinct MU codewords
\begin{equation}\label{eq:define-markers}
  \bfm_0,\bfm_1,\ldots,\bfm_{N-1}\in\cC_\muc\subset  \bbF_q^{n_\muc},~\mbox{where}~n_\muc=\ceil{\log_q k}+\ceil{\log_q t}+8.
\end{equation}

\subsection{Encoding}\label{section:encoding}
The encoder maps an information word~$\bfx\in\bbF_q^k$ to an~$(n,t)$-break-resilient codeword~$\bfc\in\bbF_q^n$.
Specifically, the information word~$\bfx\in\bbF_q^k$ is first mapped to a marker-free~$q$-ary string~$\bfy\in\bbF_q^{k+1}$ of length~$k+1$, to which a~\emph{sketch} is then prepended to construct the final output codeword~$\bfc$.

The marker-removal transform draws heavily on the techniques presented in
\cite[Alg.~1]{levy2018mutually} and \cite[Alg.~1]{elishco2021repeat}, which iteratively replace a marker from the original string with its identity information and positional information.
To streamline the presentation, the transform, as well as its inverse, is presented in Appendix~\ref{section:marker-removal}.

\begin{remark}
  Note that the marker-removal transform requires the markers to have minimum length
   \begin{equation*}
    \ceil{\log_q k}+\ceil{\log_q (N)}+1= \ceil{\log_q k}+\ceil{\log_q (3t+4)}+1 \leq\ceil{\log_q k}+\ceil{\log_q t + \log_q 7}+1 < \ceil{\log_q k}+\ceil{\log_q t}+5\leq n_\muc,
  \end{equation*}
  which is satisfied by the choice of~$n_\muc$ in~\eqref{eq:define-markers}.
  It also requires all markers to be all~$0$-prefixed, which is also true in~\eqref{eq:define-c-mu}.
\end{remark}

Let~$Q$ be a power of~$q$ such that
\begin{equation}\label{eq:define-q}
   k\cdot |\cB_T(\bfy)|\leq  k\cdot (k+1)^{T}\cdot (T+1)^2 \leq Q <q\cdot k\cdot (k+1)^{T}\cdot (T+1)^2 .
\end{equation}
For the constrained string~$\bfy=y_0y_1\cdots y_k\in\bbF_q^{k+1}$, define the polynomial
\begin{equation}
  p_\bfy(z) = \sum_{i=0}^k y_iz^i\in\bbF_Q[z].
\end{equation}
We fix an arbitrary ordering of the elements of~$\bbF_Q$.
Following this ordering, the encoder chooses the first
$\alpha\in\bbF_Q$ such that
\begin{equation}\label{eq:define-alpha}
  p_\bfy(\alpha)\neq p_{\bfy'}(\alpha) \mbox{ for every }\bfy'\in\cB_T(\bfy)\setminus\{\bfy\}.
\end{equation}
Let~$\rep_q(\beta)\in\bbF_q^{\log_qQ}$ denote the $q$-ary representation of~$\beta\in\bbF_Q$.
The~\emph{fingerprint} of~$\bfy$ is then defined as
\begin{equation}\label{eq:define-fingerprint}
  \mathtt{fp}_\bfy=\rep_q(\alpha)\circ \rep_q(p_\bfy(\alpha))\in\bbF_q^{2\log_q Q}.
\end{equation}

\begin{lemma}
  There exists such an~$\alpha\in\bbF_Q$.
\end{lemma}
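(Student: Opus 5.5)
The plan is a union bound over the ``bad'' evaluation points. For each $\bfy'\in\cB_T(\bfy)\setminus\{\bfy\}$, I would consider the difference polynomial
\begin{equation*}
  d_{\bfy'}(z)=p_\bfy(z)-p_{\bfy'}(z)\in\bbF_Q[z].
\end{equation*}
Every element of $\cB_T(\bfy)$ is a rearrangement of the symbols of $\bfy$, so it has length $k+1$. Hence $p_{\bfy'}$ is well defined with the same indexing $y'_0,\ldots,y'_k$, and $\deg d_{\bfy'}\le k$.

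Next I would show that $d_{\bfy'}$ is not the zero polynomial. The coefficients $y_i$ and $y'_i$ lie in $\bbF_q\subseteq\bbF_Q$; this embedding is valid because $Q$ is a power of $q$. Since $\bfy'\neq\bfy$, some coordinate satisfies $y_i\neq y'_i$, so the coefficient of $z^i$ in $d_{\bfy'}$ is nonzero. A nonzero polynomial of degree at most $k$ over the field $\bbF_Q$ has at most $k$ roots. Therefore the set
\begin{equation*}
  R_{\bfy'}=\{\beta\in\bbF_Q: p_\bfy(\beta)=p_{\bfy'}(\beta)\}
\end{equation*}
has size at most $k$.

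Then I would take the union over all $\bfy'\in\cB_T(\bfy)\setminus\{\bfy\}$:
\begin{equation*}
  \Bigl|\bigcup_{\bfy'} R_{\bfy'}\Bigr|\le k\cdot\bigl(|\cB_T(\bfy)|-1\bigr)<k\cdot|\cB_T(\bfy)|\le Q,
\end{equation*}
where the last inequality is \eqref{eq:define-q}. The bound on $|\cB_T(\bfy)|$ itself comes from Lemma~\ref{lemma:ball-size}, applied with length $k+1$ and $T$ breaks, which gives $(k+1)^T(T+1)^2$. So some $\beta\in\bbF_Q$ lies outside every $R_{\bfy'}$, and any such $\beta$ satisfies \eqref{eq:define-alpha}. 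In particular, the first such element in the fixed ordering exists.

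The only step needing care is the nonvanishing of $d_{\bfy'}$. It rests on two facts: every element of the ball has length exactly $k+1$, and distinct strings over $\bbF_q$ give distinct coefficient vectors after embedding in $\bbF_Q$. Apart from that, the argument is a direct count. The strict inequality comes from discarding $\bfy$ itself; it is not needed anyway, since $k|\cB_T(\bfy)|\le Q$ already leaves at least one good point once the $-1$ is kept.
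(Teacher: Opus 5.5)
Your proof is correct and matches the paper's argument: each difference $p_\bfy - p_{\bfy'}$ is a nonzero polynomial of degree at most $k$, so the bad evaluation points number at most $k(|\cB_T(\bfy)|-1) < Q$ by \eqref{eq:define-q}, which leaves a good $\alpha$. You also spell out why the difference is nonzero (every ball element has length $k+1$, and $\bbF_q\subseteq\bbF_Q$), which the paper takes for granted.
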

\begin{proof}
  The non-zero polynomial~$p_\bfy(z)-p_{\bfy'}(z)$ is of degree at most~$k$, and has at most~$k$ distinct roots.
  Across all~$\bfy'\in \cB_T(\bfy)\setminus\{\bfy\}$, the number of such roots is at most~$k\cdot (|\cB_T(\bfy)|-1)$.
  Due to~\eqref{eq:define-q}, such~$\alpha$ exists.
\end{proof}

The encoder slices~$\mathtt{fp}_\bfy$ into~$K$ chunks, each of length
\begin{equation*}
  \ceil{\frac{2\log_q Q}{K}}\leq \ceil{ 2\cdot\frac{1+(T+1)\log_q(k+1)+2\log_q(T+1)}{2(T+1)} }\leq\ceil{ \log_q(k+1)}+\ceil{\frac{1+2\log_q(T+1)}{(T+1)} }\leq  \ceil{\log_q k} +3.
\end{equation*}
The encoder then pads each chunk with zeros and treats them as field elements in~$\bbF_{q^{\ceil{\log_q k}+3}}$, and generates~$N$~\emph{MDS blocks}
\begin{equation*}
  \bfb_0,\bfb_1,\ldots,\bfb_{N-1}\in\bbF_q^{\ceil{\log_q k}+3}
\end{equation*}
using a~$[N,K]$ MDS code such that any~$K$ MDS blocks recover~$\mathtt{fp}_\bfy$.
The output codeword is then
\begin{equation}\label{eq:define-codeword}
  \bfc=\bfs\circ\bfy,~\mbox{where}~\bfs=\bfb_0\circ\bfm_0\circ\bfb_1\circ\bfm_1\circ\cdots\circ\bfb_{N-1}\circ\bfm_{N-1}.
\end{equation}

\begin{remark}
  With an~$[N,K]$ Reed--Solomon code, such encoding is possible since there are more than~$N$ field elements, i.e.,
  \begin{equation*}
    q^{\ceil{\log_q k}+3}\geq 8k > 3t+4 = N.
  \end{equation*}
\end{remark}
\begin{remark}
 The fingerprint defined in
\eqref{eq:define-alpha} is precisely the witness color produced by one round of cover-free recoloring~\cite{li2026constructing}, and $\mathtt{fp}_{\bfy}$ is its~$q$-ary representation.
\end{remark}

\subsection{Decoding}

Let~$\bfc=\bfs\circ\bfy$ be the codeword generated from the message~$\bfx$ using the procedure described in Section~\ref{section:encoding}.
Decoding begins by identifying markers from fragments and recovering~$\mathtt{fp}_\bfy$.
\begin{lemma}
  The decoder is guaranteed to recover~$\mathtt{fp}_\bfy$.
\end{lemma}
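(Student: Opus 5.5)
The plan is to show that at least $K$ of the $N$ MDS blocks can be located intact and correctly indexed in the received fragments. Then the MDS property of the $[N,K]$ code recovers the (zero-padded) chunks of $\mathtt{fp}_\bfy$, and hence $\mathtt{fp}_\bfy$ itself.

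The first step is a counting argument on the sketch $\bfs=\bfb_0\circ\bfm_0\circ\cdots\circ\bfb_{N-1}\circ\bfm_{N-1}$. Call the pair $(\bfb_i,\bfm_i)$ \emph{intact} if no break falls inside the substring $\bfb_i\circ\bfm_i$, nor at the boundary between $\bfb_i$ and $\bfm_i$. Each of the at most $t$ breaks lies at a single position, so it can spoil at most one pair. Hence at least $N-t=K$ pairs are intact.

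The second step is to show that the decoder can locate the markers in each fragment and read off the block that precedes them. The decoder scans every fragment for occurrences of any of the words $\bfm_0,\ldots,\bfm_{N-1}$. I need to argue that every occurrence found is a genuine marker at its true position in $\bfc$. Two facts will give this.
\begin{enumerate}
\item Markers cannot occur spuriously inside $\bfy$, nor straddle the boundary between $\bfs$ and $\bfy$, by the marker-free property of the transform in Appendix~\ref{section:marker-removal}. The straddling case may need that the transform also avoids proper marker prefixes at the start of $\bfy$, or a suffix condition; I would check this in the appendix.
\item Markers cannot occur spuriously inside $\bfs$ overlapping an MDS block. This is the delicate point. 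Each marker begins with $0^{\ceil{\log_q k}+2}\circ 1$, while a block $\bfb_i$ has length $\ceil{\log_q k}+3$. A spurious occurrence must either overlap a genuine marker at a nontrivial shift, which is forbidden by the mutually uncorrelated property, or lie entirely within $\bfb_i$ together with neighbouring material.

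To close off the second case I would argue by lengths. Since $n_\muc=\ceil{\log_q k}+\ceil{\log_q t}+8>\ceil{\log_q k}+3=|\bfb_i|$, any occurrence of a marker that is not genuine must overlap some genuine marker $\bfm_j$. If it overlaps partially, the MU property is violated. If it contains $\bfm_j$ properly, the lengths are impossible. If it coincides with $\bfm_j$, it is genuine.
\end{enumerate}
Because the markers are distinct, each located marker $\bfm_i$ reveals its own index $i$.

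For each intact pair, the fragment containing it therefore contains $\bfm_i$ immediately preceded by the full block $\bfb_i$. The decoder reads the $\ceil{\log_q k}+3$ symbols before $\bfm_i$ and so obtains $(i,\bfb_i)$ for at least $K$ distinct indices. There remains the possibility that a pair is not intact but $\bfm_i$ still appears with fewer preceding symbols. In that case the decoder simply skips any marker preceded by fewer than $\ceil{\log_q k}+3$ symbols in its fragment.

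A marker preceded by a full-length window is always correct, because the window lies in the same fragment and is therefore contiguous in $\bfc$; so no wrong block is ever collected. Collecting $K$ correct blocks and applying MDS decoding, then concatenating and removing the padding, recovers $\mathtt{fp}_\bfy$.

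The main obstacle I expect is the verification in item 2 that no spurious marker occurrence exists within $\bfs$ or across the boundary of $\bfs$ and $\bfy$. This depends on the MU property together with the exact guarantees of the appendix transform, so it is where I would spend care.
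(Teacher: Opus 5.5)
Your argument follows the paper's proof: at least $N-t=K$ block--marker units survive intact, and every marker occurrence inside a fragment is a genuine marker at its true position. That genuineness comes from the length comparison $n_\muc>\ceil{\log_q k}+3$ together with the MU property. The $\ceil{\log_q k}+3$ symbols preceding a located $\bfm_i$ are then $\bfb_i$, and $[N,K]$ MDS decoding returns $\mathtt{fp}_\bfy$. Your explicit rule of skipping markers preceded by a short window is a slightly more careful rendering of the paper's ``extract the MDS blocks.''

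The one step to repair is the straddling case in item~1. It does not follow from the marker-free property of $\bfy$. The appendix check you propose would also come up empty: the transform gives no guarantee about how $\bfy$ begins. For example, if no replacement occurs, then $\bfy=\bfx\circ 1$ starts with arbitrary symbols.

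What excludes straddling is the MU property, exactly as in your item~2. The sketch $\bfs$ ends with the genuine marker $\bfm_{N-1}$ of length $n_\muc$. So any length-$n_\muc$ window containing symbols of both $\bfs$ and $\bfy$ starts strictly inside $\bfm_{N-1}$ and overlaps it at a nontrivial shift, which MU forbids.

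The paper handles all cases uniformly this way. An occurrence not contained in $\bfy$ begins in the sketch and cannot fit inside a single MDS block, so it overlaps a genuine marker, and MU forces it to coincide with that marker. With this, the obstacle you anticipated disappears and your proof is complete.
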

\begin{proof}

We first show that the decoder can unambiguously locate every marker that survives breaks.
Suppose a length-$n_\muc$ substring of the codeword equals a marker.
Since~$\bfy$ is marker-free, this substring must not reside entirely in~$\bfy$ and must begin in the sketch region.
Recall that the marker length~$n_\muc$ is greater than the MDS-block length, and hence the substring cannot reside entirely in one MDS block.
Consequently, it overlaps an actual marker.
If its starting position differs from the starting position of that marker, their nonempty overlap contradicts the MU property.
Therefore, every identified occurrence of a marker begins at the true boundary of an actual marker embedded in the codeword by the encoder.

Since~$t$ breaks fall into at most~$t$ blocks--marker units (i.e., a segment of the form~$\bfb_i\circ\bfm_i$), there exist at least~$N-t=K$ units whose symbols occur contiguously in one received fragment that survived breaks.
They can be located by the decoder sliding a window across every received fragment.
With the~$K$ units, the decoder extracts the MDS blocks, and obtains~$\mathtt{fp}_\bfy$ by decoding the~$[N,K]$ MDS code.
\end{proof}

The decoder learns~$\alpha$ and~$\beta=p_\bfy(\alpha)$, i.e., the evaluation of the polynomial~$p_\bfy$ at~$\alpha$, from the recovered~$\mathtt{fp}_\bfy$.
Then, for all possible assemblies of the received fragments whose prefix equals~$\bfs$, define~$\cW$ as the collection of their suffixes, which serve as candidates for the true~$\bfy$.
We now show that~$\bfy$ is the only candidate that satisfies the condition~$\beta=p_\bfy(\alpha)$.

First, Theorem~\ref{theorem:w-in-ball} shows that every candidate~$\bfw\in\cW\setminus\{\bfy\}$ belongs to the~$(t+1)$-break ball of~$\bfy$.
The proof is deferred to Appendix~\ref{section:proof-w-in-ball} to streamline the flow of presentation.

\begin{theorem}\label{theorem:w-in-ball}
    Let~$\bfc=\bfs\circ\bfy$ be a~$q$-ary string, and let~$\bfc'=\bfs\circ\bfw\in\cB_t(\bfc)$.
    Then~$\bfw\in\cB_{t+1}(\bfy)$.
\end{theorem}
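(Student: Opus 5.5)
The proof is a cut-and-reassemble argument on positions. Since $\bfc'\in\cB_t(\bfc)$, there is a break pattern of $\tau\le t$ breaks splitting $[n]$ into consecutive intervals $I_0,\ldots,I_\tau$, and a permutation $\pi$ with
\[
\bfc'=\bfc_{I_{\pi(0)}}\circ\cdots\circ\bfc_{I_{\pi(\tau)}}.
\]
Write $L=|\bfs|$. The plan is to express $\bfw$, the suffix of $\bfc'$ after its first $L$ symbols, as a concatenation of pieces of $\bfy$ in which only a bounded number of cuts of $\bfy$ are used. The statement is purely combinatorial (it holds for an arbitrary string $\bfs$), so I will not use any marker or MU structure. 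I will only track which symbols of $\bfc$ end up in the suffix.

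First, refine the fragments. Let $\cP$ be the break pattern. Consider the pattern $\cP\cup\{L\}$, which adds the boundary between $\bfs$ and $\bfy$ if it is not already present; it has at most $\tau+1$ breaks. Each original fragment $\bfc_{I_j}$ is then split into at most two pieces, one inside $\bfs$ and one inside $\bfy$. Among all fragments only one, the one containing positions $L$ and $L+1$, is actually split. Next, in $\bfc'$ the length-$L$ prefix may end in the middle of some fragment $\bfc_{I_{\pi(r)}}$. In that case the prefix/suffix boundary of $\bfc'$ splits this fragment at one more position. So the position-level picture is this: $\bfw$ is a concatenation, in some order, of substrings of $\bfc$ determined by the cut set $\cP\cup\{L\}\cup\{p\}$, where $p$ is the internal position in $\bfc$ at which the $\bfc'$-prefix boundary falls.

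The key step is a counting argument showing that $\bfw$ uses exactly the positions of $\bfy$ with at most $t+1$ cuts. A naive count gives $\tau+2$ cuts, so something must be saved. Since $|\bfw|=|\bfy|$ but the symbols of $\bfw$ come from arbitrary positions, $\bfw$ need not consist of the positions of $\bfy$ at all: pieces of $\bfs$ can appear in the suffix while pieces of $\bfy$ appear in the prefix.

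To handle this, I would use the prefix equality $\bfc'_{[L]}=\bfs$ only as a string identity and argue differently. Let $A$ be the multiset of pieces landing in the $\bfw$-part and $B$ those in the prefix. Then $\bfs\circ\bfw$ is a rearrangement of $\bfs\circ\bfy$ under at most $\tau+1$ cut positions, namely $\cP$ together with the prefix boundary of $\bfc'$. Rather than insist $\bfw$ uses $\bfy$'s positions, I expect to show directly that $\bfw\in\cB_{t+1}(\bfy)$ by exhibiting a break pattern of $\bfy$ with at most $t+1$ breaks whose pieces, suitably permuted, give $\bfw$.

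The main obstacle is exactly this mismatch, when $\bfs$-symbols migrate into $\bfw$. The first thing I would try is an exchange/induction argument on $\tau$, reordering the fragments in the prefix so that it literally consists of $\bfs$'s positions. This can be attempted as follows: since $\bfs$ equals the prefix string, one can try to match the prefix fragments to $\bfs$ greedily and push excess to the end. If this fails in general, I would fall back to a weaker bound, say $\cB_{2t+2}$, and check whether the downstream counting via Lemma~\ref{lemma:ball-size} and the choice of $T$ tolerates it.
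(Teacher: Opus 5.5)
Your refinement step matches the paper's first step: you cut at the prefix/suffix boundary of $\bfc$ and at the point where the length-$|\bfs|$ prefix of $\bfc'$ ends. You also correctly name the real difficulty: symbols of $\bfs$ can land in $\bfw$ while pieces of $\bfy$ sit in the prefix of $\bfc'$. But the proposal stops there, so there is a genuine gap. The ``exchange/induction'' or ``greedy matching'' idea is not yet an argument. A $\bfy$-piece occupying prefix positions of $\bfc'$ agrees with $\bfs$ there, but the $\bfs$-symbols it displaces may themselves sit, in $\bfc'$, at other prefix positions. Exchanges therefore cascade along chains, and nothing in a greedy scheme controls how many cuts those cascades cost. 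The fallback to $\cB_{2t+2}$ is neither proved nor the stated theorem; the construction sizes $Q$ by $|\cB_T(\bfy)|$ with $T=t+1$.

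The missing idea is to follow these chains explicitly:
\begin{enumerate}
\item Let $\varphi$ be the position map from $\bfc$ to $\bfc'$. For each suffix position $j$, let $u(j)$ be the first $\ell>0$ with $\varphi^{\ell}(j)$ in the suffix. Set $\psi(j)=\varphi^{u(j)}(j)$ on the suffix and $\psi=\mathrm{id}$ on the prefix.
\item Since $c_j=c'_j$ for $j\le|\bfs|$, every intermediate step of a chain carries the same symbol. Hence $\psi$ is a permutation with $c'_{\psi(j)}=c_j$, so $\bfw$ is a rearrangement of $\bfy$'s own positions.
\item For the count, run this at the level of fragments. Push each suffix fragment through $\varphi$, splitting a piece only when its image in the prefix straddles a boundary between two prefix fragments of $\bfc$.
\item Injectivity of $\psi$ makes the first-return paths pairwise disjoint: a common prefix coordinate on two paths would flow forward to the same first suffix position. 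So each of the $p-1$ internal prefix boundaries triggers at most one split.
\item Your two refinement cuts give $p+s\le t+3$ fragments ($p$ in the prefix, $s$ in the suffix). This leaves at most $s+p-1\le t+2$ pieces of $\bfy$, i.e. at most $t+1$ breaks.
\end{enumerate}
Without this disjoint-chain argument, or an equivalent, your proposal gives no bound on the number of cuts.
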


Then, by the definition of~$\alpha$, the uniqueness of~$\bfy$ follows immediately.

\begin{corollary}\label{corollary:uniqueness}
  For every~$\bfw\in\cW\setminus\{\bfy\}$,
  \begin{equation*}
  \beta=p_\bfy(\alpha) \neq p_\bfw(\alpha).
  \end{equation*}
\end{corollary}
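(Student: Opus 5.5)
The argument is a direct combination of Theorem~\ref{theorem:w-in-ball} with the defining property~\eqref{eq:define-alpha} of~$\alpha$. I would proceed in three steps.

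\textbf{Step 1: place each candidate in the ball.} Fix~$\bfw\in\cW\setminus\{\bfy\}$. By the definition of~$\cW$, the string~$\bfs\circ\bfw$ is a concatenation, in some order, of the fragments received from~$\bfc=\bfs\circ\bfy$. The adversary made at most~$t$ breaks, so these fragments come from a break pattern of size at most~$t$. Hence~$\bfs\circ\bfw\in\cB_t(\bfc)$ directly from Definition~\ref{def:ball}. Note that~$|\bfw|=|\bfy|=k+1$, since the total length of the fragments is preserved. Theorem~\ref{theorem:w-in-ball} then gives~$\bfw\in\cB_{t+1}(\bfy)=\cB_T(\bfy)$, using~$T=t+1$ from~\eqref{eq:define-big-var}.

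\textbf{Step 2: apply the choice of~$\alpha$.} Since~$\bfw\neq\bfy$, we have~$\bfw\in\cB_T(\bfy)\setminus\{\bfy\}$. The encoder chose~$\alpha$ by~\eqref{eq:define-alpha} so that~$p_\bfy(\alpha)\neq p_{\bfy'}(\alpha)$ for every such~$\bfy'$. Taking~$\bfy'=\bfw$ gives~$p_\bfw(\alpha)\neq p_\bfy(\alpha)$.

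\textbf{Step 3: connect to the decoder's~$\beta$.} By the preceding lemma, the decoder recovers~$\mathtt{fp}_\bfy$ correctly. So the~$\alpha$ it reads is exactly the encoder's~$\alpha$, and~$\beta=p_\bfy(\alpha)$ holds by~\eqref{eq:define-fingerprint}. Combined with Step~2, this yields the claim.

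The only step needing care is the first. I would check that the assemblies the decoder forms really lie in~$\cB_t(\bfc)$: the received fragments are~$\mathsf{Frag}_\cP(\bfc)$ with~$|\cP|\le t$, and any ordering of them is one of the permuted concatenations in the definition. The substantive work sits inside Theorem~\ref{theorem:w-in-ball}, which we are allowed to assume, so no real obstacle remains.
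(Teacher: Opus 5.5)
Your proposal is correct and follows the paper's argument. The paper likewise obtains the corollary immediately by combining Theorem~\ref{theorem:w-in-ball}, which places every candidate $\bfw\in\cW\setminus\{\bfy\}$ in $\cB_{t+1}(\bfy)=\cB_T(\bfy)$, with the defining property~\eqref{eq:define-alpha} of~$\alpha$; your checks that each assembly lies in $\cB_t(\bfc)$ and that the decoded $\alpha,\beta$ match the encoder's are details the paper leaves implicit.
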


The decoder examines every candidate~$\bfw\in\cW$, evaluating~$p_\bfw(\alpha)$ until it finds~$\bfy$.
Finally, it outputs the~$\bfx$ by inverting the marker-removal transform, which concludes the decoding procedure.

\section{Analysis}
In this section, we analyze the redundancy and computational complexity of the construction presented in Section~\ref{section:construction}.
\begin{theorem}[Redundancy]
  The proposed~$(n,t)$-BRC has redundancy~$O(t\log_q n)$.
\end{theorem}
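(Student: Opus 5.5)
The redundancy is $n-k$, where $n=|\bfs|+|\bfy|=|\bfs|+k+1$. The plan is to compute $|\bfs|$ exactly from~\eqref{eq:define-codeword} and then bound it in terms of $t$ and $k\le n$.

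The sketch $\bfs$ consists of $N$ MDS blocks, each of length $\ceil{\log_q k}+3$, and $N$ markers, each of length $n_\muc=\ceil{\log_q k}+\ceil{\log_q t}+8$. With $N=3t+4$ from~\eqref{eq:define-big-var}, this gives
\begin{equation*}
  n-k=N\bigl(\ceil{\log_q k}+3\bigr)+N\bigl(\ceil{\log_q k}+\ceil{\log_q t}+8\bigr)+1=(6t+8)\ceil{\log_q k}+(3t+4)\bigl(\ceil{\log_q t}+11\bigr)+1 .
\end{equation*}
This matches the explicit expression announced in the contributions. The extra $+1$ is the single symbol added by the marker-removal transform, which maps $\bbF_q^k$ to $\bbF_q^{k+1}$.

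For the asymptotics, I use $t<k\le n$, so that $\ceil{\log_q k}\le \log_q n+1$ and $\ceil{\log_q t}\le\log_q n+1$. Each of the two main terms is then $O(t)\cdot O(\log_q n)$, and the constants are absorbed since $t\ge1$ and $n\ge 2$. Hence $n-k=O(t\log_q n)$.

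For the comparison with the lower bound, I note that under $t\le n^{1-\varepsilon}$ we have $\log_q(n/t)\ge\varepsilon\log_q n$. Therefore $O(t\log_q n)=O(t\log_q(n/t))$, which matches Theorem~\ref{theorem:converse}.

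I expect no real obstacle. The only care needed is to count the sketch length exactly rather than inadvertently using $\log_q Q$; the MDS block length was already shown to be at most $\ceil{\log_q k}+3$, so padding to exactly that length is what enters the count.
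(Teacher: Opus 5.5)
Your proposal is correct and follows the paper's proof essentially verbatim. Like the paper, you count the sketch as $N=3t+4$ MDS blocks of length $\ceil{\log_q k}+3$ and $N$ markers of length $n_\muc$, add the one extra symbol of $\bfy\in\bbF_q^{k+1}$, and arrive at the same explicit expression. Your asymptotic step via $t<k\le n$ (with $q$ fixed, so that additive constants are $O(\log_q n)$) spells out what the paper leaves implicit.
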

\begin{proof}
  The redundancy can be computed as
  \begin{equation*}
  \begin{split}
      n - k &\overset{\eqref{eq:define-codeword}}{=} N\cdot (\ceil{\log_q k}+3+\ceil{\log_q k}+\ceil{\log_q t}+8) + k + 1 -k=(3t+4)\cdot (2\ceil{\log_q k}+\ceil{\log_q t}+11)+1\\
      &\overset{\eqref{eq:define-big-var}}{=} (6t+8)\ceil{\log_q k}+(3t+4)\cdot (\ceil{\log_q t}+11)+1 = O(t\log_q n).\qedhere
  \end{split}
  \end{equation*}
\end{proof}

\begin{theorem}[Encoding complexity]\label{theorem:encoding-complexity}
  The proposed~$(n,t)$-BRC has encoding complexity~$O(t^6n^{2t+4}\log_q^2n)$.
\end{theorem}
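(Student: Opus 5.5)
We bound the cost of each encoding stage and show that the search for $\alpha$ in~\eqref{eq:define-alpha} dominates. The marker-removal transform of Appendix~\ref{section:marker-removal} runs in time polynomial in $k$ with low degree. We take it as $O(k^2)$ or a similar small polynomial, since each of at most $O(k)$ iterations scans the string for markers. The MDS encoding of $\mathtt{fp}_\bfy$ with an $[N,K]$ Reed--Solomon code over $\bbF_{q^{\ceil{\log_q k}+3}}$ costs $O(NK)$ field operations. Each such operation costs $O(\log_q^2 k)$ operations over $\bbF_q$. This is polynomial in $t\log n$ and negligible. Assembling $\bfc$ is linear. Since $n\ge k$, it suffices to bound everything in terms of $n$.

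The main step is the search for $\alpha$. The encoder enumerates $\cB_T(\bfy)$ by running over all $\tau\le T$, all break positions, and all permutations. The number of strings generated is at most $\sum_{\tau}\binom{k}{\tau}(\tau+1)!\le (k+1)^T(T+1)^2$ by Lemma~\ref{lemma:ball-size}. Each generated string has length $k+1$, so producing it costs $O(k)$. Next, the encoder scans candidates $\alpha\in\bbF_Q$ in the fixed order. For each candidate it evaluates $p_\bfy(\alpha)$ and $p_{\bfy'}(\alpha)$ for every $\bfy'$ in the ball. By~\eqref{eq:define-q}, at most $Q$ candidates are scanned, and $Q<q\,k(k+1)^T(T+1)^2$. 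By Horner's rule, each evaluation costs $O(k)$ multiplications in $\bbF_Q$. Each multiplication costs $O(\log_q^2 Q)$ operations in $\bbF_q$ with schoolbook arithmetic, and $\log_q Q=O(T\log_q k)$. The total search cost is therefore
\[
Q\cdot |\cB_T(\bfy)|\cdot O(k)\cdot O(\log_q^2 Q)=O\big(k(k+1)^T T^2\cdot (k+1)^T T^2\cdot k\cdot T^2\log_q^2 k\big).
\]
Substituting $T=t+1$ and $k<n$ gives $O(t^6 n^{2T+2}\log_q^2 n)=O(t^6n^{2t+4}\log_q^2 n)$, as claimed.

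The main obstacle is bookkeeping rather than any conceptual difficulty. We must match the exponent $2t+4$ and the factor $t^6$ exactly. Two factors of $(T+1)^2$ come from $Q$ and from $|\cB_T(\bfy)|$. The factor $\log_q^2 Q$ contributes $T^2\log_q^2 k$, which gives $t^6$ in total. On the $n$ side, $k\cdot k\cdot(k+1)^{2T}$ gives $n^{2t+4}$. We also need to confirm that computing $\mathtt{fp}_\bfy$ itself, that is, the representation maps $\rep_q$ and the slicing, is within this bound; it is, and so are all other stages. Finally, we must ensure that constructing $\bbF_Q$ is cheap. This needs an irreducible polynomial of degree $\log_q Q$, which can be found deterministically in time polynomial in $q$ and $\log_q Q$. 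That cost is also dominated. Summing all stages yields the stated bound.
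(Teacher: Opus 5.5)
Your proposal is correct and essentially matches the paper's proof: both bound the $\alpha$-search by $O(Q\cdot|\cB_T(\bfy)|\cdot k)$ Horner operations in $\bbF_Q$, which is $O(t^4n^{2t+4})$, multiply by the schoolbook cost $O(\log_q^2 Q)=O(t^2\log_q^2 n)$, and note that marker removal, Reed--Solomon encoding, and assembly are dominated. The only small imprecision is your $O(k^2)$ estimate for marker removal: a direct scan against all $N$ markers of length $n_\muc$ costs $O(Nk^2n_\muc)=O(tk^2\log_q n)$, which is still negligible.
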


\begin{proof}
Let~$B=(k+1)^{t+1}(t+2)^2$.
Recall that~$|\cB_T(\bfy)|\leq B$ by Lemma~\ref{lemma:ball-size}, and the choice of $Q$ in~\eqref{eq:define-q} satisfies
~$kB\leq Q<qkB$.
In the worst case, the unlucky encoder tests all $Q$ elements in
$\bbF_Q$.
For each element, it evaluates at most $B$ polynomials of degree at most $k$.
Horner's rule uses $O(k)$ operations in $\bbF_Q$ per evaluation.
Hence the fingerprint search requires
\begin{equation*}
  O(QBk)=O(k^2B^2)= O\left(k^2\cdot (k+1)^{2t+2}(t+2)^4\right)=O(t^4n^{2t+4})
\end{equation*}
operations in $\bbF_Q$.
Using schoolbook arithmetic, this equals~$O(t^4n^{2t+4}\log_q^2 Q)=O(t^6n^{2t+4}\log_q^2 n)$ operations in~$\bbF_q$.

Every marker replacement shortens the current string, so there are at
most $k+1$ replacements. A direct scan compares $N$ markers of length
$n_{\muc}$ at $O(k)$ positions, costing
$O(Nkn_{\muc})$ operations in~$\bbF_q$ per replacement.
Marker removal therefore costs
\begin{equation*}
   O(Nk^2n_{\muc})=O(tk^2(\log_q k +\log_q t))=O(tk^2\log_q n)
\end{equation*}
operations in~$\bbF_q$.
Finally, Reed--Solomon encoding uses $O(NK)$ operations in~$\bbF_{q^{\ceil{\log_q k}+3}}$ and hence 
\begin{equation*}
  O(NK \log_q^2n)= O(t^2\log_q^2n).
\end{equation*}
operations in~$\bbF_q$.
Constructing and writing the final codeword costs an additional $O(n)$ operations in~$\bbF_q$.
Combining these bounds proves the general complexity claim.
\end{proof}

\begin{theorem}[Decoding complexity]\label{theorem:decoding-complexity}
  The proposed~$(n,t)$-BRC has decoding complexity~$O\left((t+1)!\cdot n \cdot t^2\log_q^2n\right)$.
\end{theorem}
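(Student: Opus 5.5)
We account for the cost of each decoding stage separately, mirroring the encoding complexity proof, and show that the enumeration of assemblies dominates.

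\textbf{Stage 1: marker location and fingerprint recovery.} The decoder slides a window of length $n_\muc$ over every received fragment and compares it against each of the $N$ markers. The fragments have total length $n$, so this costs $O(nNn_\muc)=O(tn\log_q n)$ operations in $\bbF_q$. By the marker-recovery lemma, every located occurrence begins at a genuine marker boundary. Hence the decoder can read off the preceding MDS block of length $\ceil{\log_q k}+3$ and keep $K$ blocks whose units are intact. It then decodes the $[N,K]$ Reed--Solomon code by Lagrange interpolation from $K$ erasure-free positions. This takes $O(K^2)$ operations in $\bbF_{q^{\ceil{\log_q k}+3}}$, hence $O(t^2\log_q^2 n)$ operations in $\bbF_q$ with schoolbook arithmetic. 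The result is $\mathtt{fp}_\bfy$, from which the decoder parses $\alpha$ and $\beta$.

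\textbf{Stage 2: enumeration and testing of candidates (dominant cost).} There are at most $t+1$ fragments, so there are at most $(t+1)!$ orderings. For each ordering the decoder concatenates the fragments, which costs $O(n)$. It checks whether the prefix equals $\bfs$ (the decoder can rebuild $\bfs$ from $\mathtt{fp}_\bfy$ by re-encoding, at $O(t^2\log_q^2 n)$ once), which costs $O(n)$. It then evaluates $p_\bfw(\alpha)$ by Horner's rule. Horner uses $O(k)\le O(n)$ operations in $\bbF_Q$. Since $\log_q Q=O(t\log_q n)$ by \eqref{eq:define-q}, each operation costs $O(t^2\log_q^2 n)$ in $\bbF_q$. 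So each ordering costs $O(n\,t^2\log_q^2 n)$, and Stage 2 costs $O((t+1)!\cdot n\cdot t^2\log_q^2 n)$ in total. Corollary~\ref{corollary:uniqueness} guarantees that exactly one candidate passes, so no further disambiguation is needed.

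\textbf{Stage 3: inverting the marker removal.} By symmetry with the encoding analysis, at most $k+1$ replacement steps are undone, each with $O(Nkn_\muc)$ work. This gives $O(tk^2\log_q n)$ operations.

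The main obstacle is this last term. It is quadratic in $n$ and is not obviously absorbed by $(t+1)!\,n\,t^2\log_q^2 n$. I would first try to argue that the inverse transform is cheaper than a full rescan. Each inverse step only reads a stored pointer and identity, and reinserts a marker, so with a suitable data structure (a linked list) the whole inversion runs in $O(kn_\muc)=O(n\log_q n)$, which is within the bound. Failing that, I would note that the appendix inverse may be linear per step but bounded differently. Summing the three stages then yields the stated bound, with Stage 2 dominant.
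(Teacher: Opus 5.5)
\textbf{Gap in Stage 3.} Your Stages 1 and 2 match the paper's accounting: the marker scan in $O(nNn_\muc)$, MDS decoding in $O(t^2\log_q^2 n)$, and $(t+1)!$ assemblies, each costing $O(n)$ plus $O(k)$ operations in $\bbF_Q$ at $O(t^2\log_q^2 n)$ apiece. Your remark that $\bfs$ must be rebuilt by re-encoding $\mathtt{fp}_\bfy$ spells out something the paper leaves implicit. The gap is Stage 3, which you never close.

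The per-step bound $O(Nkn_\muc)$ is borrowed from the encoder, where it pays for rescanning the string to find the leftmost marker. The inverse does no searching: it only parses the trailing pointer $\rep_{\mathrm{id}}(j)\circ\rep_{\mathrm{pos}}(d)\circ 0$. So the obstacle you flag is not the real cost. Your replacement argument does not establish the bound either. A linked list gives $O(1)$ insertion only once you already hold a handle to the insertion point, and reaching the node at distance $d$ from the right end costs $\Theta(d)$, which can be $\Theta(k)$ per step.

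The number of steps can be $\Omega(k/n_\muc)$. For example, take $\bfx$ with that many disjoint marker occurrences located away from both ends: each step deletes only the leftmost occurrence, so the others survive. A naive list, like a plain array, can then cost $\Omega(k^2/\log_q k)$ in total. That is not absorbed by $(t+1)!\,n\,t^2\log_q^2 n$ when $t$ is constant. The closing ``failing that'' sentence is not an argument.

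\textbf{How to close it.} You need an explicit structure with sublinear positional insertion. The paper stores the sequence in a structure that supports insertion at a given rank in $O(\log k)$ time, such as a balanced order-statistic tree or a rope. The at most $k+1$ inverse steps then cost $O(k\log_2^2 n)$ in total, which is absorbed.

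Alternatively, you can rescue your linked list with a persistent cursor. Because the encoder always deletes the \emph{leftmost} occurrence, any marker in the new string that starts before $p_j$ must straddle the junction. Hence the start positions satisfy $p_{j+1}\ge p_j-n_\muc+1$. When the decoder undoes the steps in reverse order, the cursor therefore moves right by at most $n_\muc-1$ per step. Total leftward movement is then at most the $O(k)$ string length plus the total rightward movement, so the whole inversion runs in $O(k\,n_\muc)=O(k\log_q n)$. Either argument completes Stage 3; without one of them the proposal does not prove the stated bound.
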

\begin{proof}
  Sliding the marker windows over all received fragments and directly
  comparing them with the $N$ markers costs
  \begin{equation*}
    O(nNn_{\muc})=O(n\cdot t\log_q n)
  \end{equation*}
  operations in~$\bbF_q$.
  Recovering the fingerprint from $K$ surviving MDS blocks requires~$O(K^2)=O(t^2)$ field operations in $\bbF_{q^{\ceil{\log_q k}+3}}$, which is
  \begin{equation*}
    O(t^2 \log_q^2n)
  \end{equation*}
  operations in~$\bbF_q$.
  Note that there are at most~$(t+1)!$ assemblies of the received fragments.
  Constructing an assembly and checking its prefix costs $O(n)$ operations in~$\bbF_q$.
  Evaluating the candidate polynomial at~$\alpha$ takes~$O(k)$ operations in~$\bbF_Q$.
  Therefore, testing all candidate assemblies costs
  \begin{equation*}
    O((t+1)! (n + kt^2 \log_q^2k))
  \end{equation*}
  operations in~$\bbF_q$.
  The inverse marker-removal transform has at most $k+1$ replacement steps.
  Using a data structure that supports insertion in~$O(\log_2 k)$ time to store the sequence, the inverse marker-removal can be implemented using $O(k\log_2^2n)$ operations in~$\bbF_q$.
  Combining the preceding bounds gives complexity
  \begin{equation*}
    O\left((t+1)!\cdot n\cdot t^2\log_q^2n\right).
  \end{equation*}
\end{proof}

\section{Conclusion}
In this paper, we established a redundancy lower bound of~$\Omega(t\log_q(n/t))$ for~$q$-ary break-resilient codes.
We also presented a deterministic~$q$-ary break-resilient code with redundancy
\begin{equation*}
  (6t+8)\ceil{\log_q k}+(3t+4)\cdot (\ceil{\log_q t}+11)+1 = O(t\log_q n).
\end{equation*}
For every~$t\leq n^{1-\varepsilon}$, this matches the information-theoretic lower bound up to a constant factor.

\section{Acknowledgment}
We thank Dr. Jin Sima for helpful discussions on extending the binary construction to the~$q$-ary setting.

\appendices
\renewcommand{\thetheorem}{\thesection.\arabic{theorem}}

\section{Marker-Removal Transform}\label{section:marker-removal}

For fixed integers~$l,k>0$, denote a set of~$0$-prefixed~\emph{markers} by
\begin{equation*}
  \bfm_0,\ldots,\bfm_{l-1}\in 0\circ\Sigma_q^{m-1},~\mbox{where}~m>\ceil{\log_q k}+\ceil{\log_q l}+1.
\end{equation*}
We describe a transform from an arbitrary message~$\bfx\in\Sigma_q^k$ to a marker-free string~$\bfy\in\Sigma_q^{k+1}$, as well as its inverse.

\subsection{Transform}
The encoder first appends a sentinel symbol~$1$ to~$\bfx$ and initializes
\begin{equation*}
  \bfz=\bfx\circ 1.
\end{equation*}
While~$\bfz$ contains a marker, let the leftmost occurrence be~$\bfm_j$, and write~$\bfz$ as
\begin{equation*}
  \bfz=\bfz'\circ\bfm_j\circ\bfz''~\mbox{where}~|\bfz''|=d.
\end{equation*}
Let~$\rep_{{\mathrm{id}}}(j)\in\Sigma_q^{\ceil{\log_q l}}$ and~$\rep_{{\mathrm{pos}}}(d)\in\Sigma_q^{\ceil{\log_q k}}$ denote the~$q$-ary representation of marker identity and right offset of~$\bfm_j$, respectively.
Note that~$d$ is the distance of the deleted marker from the
right end, rather than its absolute position in~$\bfz$.

During each~\emph{step}, the encoder deletes~$\bfm_j$ from~$\bfz$, and appends a~\emph{pointer}, defined as the identity and position information of~$\bfm_j$ followed by a sentinel symbol~$0$.
Specifically,~$\bfz$ is updated as
\begin{equation*}
  \bfz\gets\bfz'\circ\bfz''
  \circ\rep_{{\mathrm{id}}}(j)
  \circ\rep_{{\mathrm{pos}}}(|\bfz''|)
  \circ 0.
\end{equation*}
The encoder repeats this step until~$\bfz$ is marker-free; the termination is guaranteed in the following lemma.

\begin{lemma}
  The marker-removal process must terminate with a marker-free string~$\bfz$. 
\end{lemma}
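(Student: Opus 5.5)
Termination follows from a strict decrease in length: I will show that each step shortens $\bfz$, so the loop can run only finitely many times. The loop condition is that $\bfz$ still contains a marker, so once it stops, $\bfz$ is marker-free by definition. This gives both claims of the lemma.

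\textbf{Step 1 (each step shortens $\bfz$).} In one step we delete $\bfm_j$, which has length $m$. We then append a pointer of length
\begin{equation*}
  \ceil{\log_q l}+\ceil{\log_q k}+1.
\end{equation*}
The hypothesis $m>\ceil{\log_q k}+\ceil{\log_q l}+1$ makes the pointer strictly shorter than $m$. So $|\bfz|$ drops by at least one per step.

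\textbf{Step 2 (the pointer is well defined).} I must check that the right offset $d=|\bfz''|$ can be written with $\ceil{\log_q k}$ symbols, i.e.\ $d\le q^{\ceil{\log_q k}}-1$ or an equivalent range.

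1. Initially $|\bfz|=k+1$, and by Step 1 the length never increases. Hence $d\le |\bfz|-m\le k+1-m<k$.
2. Since $m\ge 2$, this gives $d\le k-1$, so $d<k\le q^{\ceil{\log_q k}}$ and $d$ fits.
3. The identity $j\in\{0,\ldots,l-1\}$ fits in $\ceil{\log_q l}$ symbols for the same reason.

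This range check is the only point that needs care; the rest is bookkeeping.

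\textbf{Step 3 (conclude).} Because $|\bfz|$ is a nonnegative integer that strictly decreases, there are at most $k+1$ steps. This agrees with the count used later in the complexity analysis.

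The $0$-prefix of the markers and the sentinel symbols are not needed for termination. They matter only for invertibility, which is handled separately.
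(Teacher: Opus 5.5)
Your proof is correct and follows the paper's argument: the pointer has length $\ceil{\log_q l}+\ceil{\log_q k}+1<m$, so each step strictly shortens $\bfz$ and the loop halts, at which point $\bfz$ is marker-free by the loop condition. Your extra check that the offset satisfies $d\le k-1$ and so fits in $\ceil{\log_q k}$ symbols is valid; the paper leaves this well-definedness point implicit.
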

\begin{proof}
  Since the length of a pointer is
  \begin{equation*}
    \ceil{\log_q l}+\ceil{\log_q k}+1<m,
  \end{equation*}
  replacing a marker with a pointer shortens the string~$\bfz$.
  Therefore, it is impossible for the encoder to enter infinite loop while~$\bfz$ remain unchanged after each replacement step, and hence termination is guaranteed.
\end{proof}

At termination, the resulting~$\bfz$ is marker-free, but may be shorter than~$k+1$.
The encoder then prepends a run of $1$'s to it, and outputs
\begin{equation*}
  \bfy =1^{k+1-|\bfz|}\circ\bfz \in \Sigma_q^{k+1}.
\end{equation*}
Since every marker is~$0$-prefixed, prepending a run of
$1$'s does not introduce a new marker occurrence.
Therefore,
\begin{theorem}
  The output~$\bfy\in\Sigma_q^{k+1}$ is marker-free.
\end{theorem}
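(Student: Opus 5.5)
The plan is to split any putative marker occurrence in $\bfy=1^{k+1-|\bfz|}\circ\bfz$ according to where it starts, and to rule out each case using two facts: the loop leaves $\bfz$ marker-free, and every marker begins with $0$. Before that, I will confirm that $\bfy$ is well defined and has length exactly $k+1$. The initial string $\bfz=\bfx\circ 1$ has length $k+1$. By the proof of the termination lemma, each replacement step strictly shortens $\bfz$, because a pointer has length $\ceil{\log_q l}+\ceil{\log_q k}+1<m$. Hence at termination $|\bfz|\le k+1$, the prefix $1^{k+1-|\bfz|}$ has nonnegative length, and $\bfy\in\Sigma_q^{k+1}$.

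Next, suppose for contradiction that some $\bfm_j$ occurs in $\bfy$ at position $p$, i.e., $\bfy_{\{p,\ldots,p+m-1\}}=\bfm_j$. Write $a=k+1-|\bfz|$ for the length of the run of $1$'s, and consider two cases.
\begin{enumerate}
  \item If $p\le a$, the first symbol of the occurrence is $y_p=1$. Every marker lies in $0\circ\Sigma_q^{m-1}$, so its first symbol is $0$, which is a contradiction.
  \item If $p>a$, the whole window $\{p,\ldots,p+m-1\}$ lies inside the suffix $\bfz$. The window then gives an occurrence of $\bfm_j$ in $\bfz$. This contradicts the termination lemma, which states that the loop halts only when $\bfz$ contains no marker.
\end{enumerate}
These two cases are exhaustive, so $\bfy$ is marker-free.

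There is no real obstacle here. The only point needing care is the boundary-straddling case, where an occurrence begins inside the $1$-run and extends into $\bfz$. This case is absorbed into the first case because only the first symbol of the occurrence matters, and the $0$-prefix property of the markers excludes it. I will also note explicitly that the termination lemma guarantees the marker-free property of the final $\bfz$, not merely that the loop stops. This holds because the while-condition is precisely ``$\bfz$ contains a marker.''
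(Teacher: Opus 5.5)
Your proof is correct and follows the paper's argument. The loop leaves $\bfz$ marker-free, and since every marker begins with $0$, prepending the run of $1$'s cannot create a new occurrence; your case split on the starting position (which covers the boundary-straddling case) and your check that $|\bfz|\le k+1$ spell out what the paper states in one sentence.
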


\subsection{Inverse}

The inverse mapping recovers~$\bfx$ from the marker-free~$\bfy$ by reversing the aforementioned replacement steps in the opposite order.
The decoder initializes~$\bfz=\bfy$.
The last symbol of~$\bfz$ informs the decoder whether the suffix of~$\bfz$ is a pointer.
While it is~$0$, the decoder learns from the suffix the marker identity~$\bfm_j$ and marker position~$d$.
Recall that~$d$ is the distance of~$\bfm_j$ from the right end; this design enables the decoder to insert the marker without knowing the number of padded~$1$'s on the left of~$\bfy$.

The decoder then removes the suffix pointer from~$\bfz$, and obtains the intermediate string
\begin{equation*}
  \bfz'\circ\bfz'',~\mbox{where}~|\bfz''|=d.
\end{equation*} 
It recovers the deleted marker by setting
\begin{equation*}
  \bfz\gets \bfz'\circ\bfm_j\circ\bfz''.
\end{equation*}

Once the last symbol of~$\bfz$ is~$1$, all replacement steps have been reversed, and the current string is
\begin{equation*}
    \bfz=1^{|\bfz|-k-1}\circ\bfx\circ 1.
\end{equation*}
The decoder removes the final sentinel symbol~$1$ and iteratively removes the leading~$1$'s until the string length is~$k$, thereby recovering the original information word $\bfx$.
Hence,
\begin{theorem}
    The decoder outputs the unique~$\bfx\in\Sigma_q^k$ from which~$\bfy\in\Sigma_q^{k+1}$ was encoded.
\end{theorem}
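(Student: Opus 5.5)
The plan is to show that the inverse procedure undoes the forward transform step by step, carrying along the padding prefix $1^{p}$, where $p=k+1-|\bfz_r|$. Let $\bfz_0=\bfx\circ 1$, and let $\bfz_1,\ldots,\bfz_r$ be the successive strings produced by the forward transform. Here $r$ is the number of replacement steps, which is finite by the termination lemma, and $\bfy=1^{p}\circ\bfz_r$. I will prove by downward induction on $i$ the following invariant: after the inverse procedure has performed $r-i$ iterations, its current string equals $1^{p}\circ\bfz_i$.

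The base case $i=r$ holds because the decoder initializes $\bfz=\bfy=1^{p}\circ\bfz_r$. For the inductive step, suppose the current string is $1^{p}\circ\bfz_i$ with $i\ge 1$. By construction
\begin{equation*}
  \bfz_i=\bfz'\circ\bfz''\circ\rep_{\mathrm{id}}(j)\circ\rep_{\mathrm{pos}}(d)\circ 0,
  \qquad
  \bfz_{i-1}=\bfz'\circ\bfm_j\circ\bfz'',
  \qquad
  |\bfz''|=d.
\end{equation*}
The inductive step then goes in three parts.
\begin{enumerate}
  \item The last symbol is $0$, so the decoder correctly recognizes a pointer.
  \item The pointer occupies exactly the last $\ceil{\log_q l}+\ceil{\log_q k}+1$ symbols, so the decoder can read $j$ and $d$ from fixed-length fields. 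This requires $d\le k$ to be representable in $\ceil{\log_q k}$ symbols; I will check this against $d\le|\bfz_{i-1}|-m<k+1$ and, if needed, adjust the representation bound.
  \item After the pointer is deleted, the string is $1^{p}\circ\bfz'\circ\bfz''$. Since $d$ is measured from the right end, inserting $\bfm_j$ exactly $d$ positions from the right gives $1^{p}\circ\bfz'\circ\bfm_j\circ\bfz''=1^{p}\circ\bfz_{i-1}$, no matter what $p$ is.
\end{enumerate}
This last point is precisely where right-offset indexing is used, and it is the crux of the argument.

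For termination of the inverse, every $\bfz_i$ with $i\ge1$ ends in the pointer sentinel $0$, whereas $\bfz_0=\bfx\circ 1$ ends in $1$. Hence the decoder's loop runs exactly $r$ times and stops at the string $1^{p}\circ\bfx\circ 1$. The only subtlety is the case $r=0$ with $p=0$, where the last symbol is still the appended $1$, so this case is consistent. The decoder then removes the trailing $1$ and strips leading symbols until the length is $k$. Because $|\bfx|=k$ is known, it removes exactly the $p$ padded ones and nothing from $\bfx$, even if $\bfx$ itself begins with $1$'s. It therefore outputs $\bfx$.

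Uniqueness follows because the forward transform is a deterministic function of $\bfx$ and the inverse recovers $\bfx$ from its image. The transform is therefore injective, and $\bfx$ is the unique preimage of $\bfy$.

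I expect the main obstacle to be the bookkeeping in the inductive step: showing that the fixed-length pointer parse and the right-offset insertion interact correctly with the unknown-length padding, and that the stopping test "last symbol is $1$" never fires prematurely. The latter is guaranteed because every pointer, and hence every $\bfz_i$ with $i\ge 1$, ends with $0$.
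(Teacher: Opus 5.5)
Your proposal is correct and takes essentially the same route as the paper: both undo the replacement steps in reverse order, using the right-offset pointer so that marker insertion does not depend on the unknown padding length, and using the trailing sentinel ($0$ for a pointer, $1$ for $\bfx\circ 1$) to stop at the right step; your downward induction with the invariant $1^{p}\circ\bfz_i$ simply makes this explicit. The representability check you deferred in step 2 closes without any adjustment, because $|\bfz_{i-1}|\le k+1$ and $m\ge 2$ give $d\le k+1-m\le k-1\le q^{\ceil{\log_q k}}-1$.
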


\section{Proof of Theorem~\ref{theorem:w-in-ball}}\label{section:proof-w-in-ball}

\begin{figure}[t]
	\centering
	\begin{subfigure}{0.68\textwidth}
		\includegraphics[width=1\textwidth]{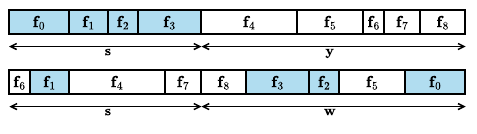}
		\caption{}
	\end{subfigure}\\
	\begin{subfigure}{0.68\textwidth}
		\includegraphics[width=1\textwidth]{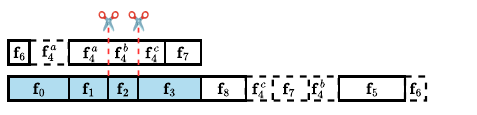}
		\caption{}
	\end{subfigure}\\
	\begin{subfigure}{0.68\textwidth}
		\includegraphics[width=1\textwidth]{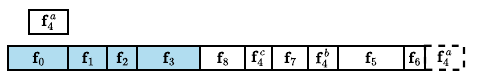}
		\caption{}
	\end{subfigure}\\
	\begin{subfigure}{0.68\textwidth}
		\includegraphics[width=1\textwidth]{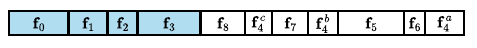}
		\caption{}
	\end{subfigure}
 	\caption{(a) Illustration of the true codeword~$\bfc=\bfs\circ\bfy$ and a candidate word~$\bfc'=\bfs\circ\bfw$, which are assembled from the same fragment multiset and share prefix~$\bfs$. The prefix fragments in~$\bfc$ are colored.
  (b) Let~$\varphi$ denote the permutation of coordinates from~$\bfc$ to~$\bfc'$, and let~$\psi$ be as defined in~\eqref{eq:define-psi}.
  We permute the fragments in~$\bfc$ using~$\psi$.
  The prefix fragments ($\bff_0,\ldots,\bff_3$) are fixed, and the suffix fragments are repeatedly permuted by~$\varphi$. Notably, after the first application of~$\varphi$, the fragment~$\bff_4$ crosses two internal boundaries of prefix fragments, and is split into~$\bff_4^a,\bff_4^b,\bff_4^c$.
  (c)-(d) The iteration continues until all suffix fragments reach the suffix region.
  }
  \label{fig:uniqueness}
\end{figure}

  Let~$r=|\bfs|$ and~$m=|\bfy|=|\bfw|$.
  We first split the fragment that crosses the boundary between the prefix~$\bfs$ and the suffix~$\bfy$ in~$\bfc$, and the fragment that crosses the boundary between the prefix~$\bfs$ and suffix~$\bfw$ in~$\bfc'$, if it exists.
  Now, every fragment either resides entirely in the prefix or entirely in the suffix in~$\bfc$.
  Meanwhile, every fragment either resides entirely in the prefix or entirely in the suffix, in~$\bfc'$.
  See Figure~\ref{fig:uniqueness}(a) for an illustrative example.

  The split process makes at most~$2$ extra breaks to the fragments.
  Let~$p$ denote the number of prefix fragments in~$\bfc$, and let~$s$ denote the number of suffix fragments in~$\bfc$, and we have
  \begin{equation}\label{eq:bound-fragments}
    p+s\leq t+3.
  \end{equation}

  Since~$\bfc'$ is a permuted assembly of the fragments that partition~$\bfc$, it naturally defines a permutation~$\varphi:[n]\to[n]$, where~$\varphi(i)$ is the coordinate in~$\bfc'$ occupied by the symbol which originally occupied coordinate~$i$ in~$\bfc$.
  Formally,
  \begin{equation}\label{eq:j-phi-j}
    c_j=c'_{\varphi(j)}~\mbox{for}~j\in[n].
  \end{equation}
  Note that this permutation~$\varphi$ is~\emph{order-preserving} within the range of a fragment.
  That is, for coordinates~$i,j$ such that the symbols~$c_i,c_j$ lie in one fragment,
  \begin{equation*}
    i<j\Longrightarrow \varphi(i)<\varphi(j).
  \end{equation*}
  Moreover, since~$\bfc$ and~$\bfc'$ share the common prefix~$\bfs$, we have
  \begin{equation}\label{eq:common-prefix}
    c_j=c'_{j}~\mbox{for}~j\in[r].
  \end{equation}

  Let~$\cP=[r]$,~$\cS=\{r+1,\ldots,n\}$, and for every~$j\in\cS$, define the first-return time
  \begin{equation*}
    u(j)=\min \{l>0\mid \varphi^{l}(j)\in \cS\}.
  \end{equation*}
  It is well-defined since the cycle of~$\varphi$ containing~$j$ eventually returns to~$j$.
  Equivalently,~$u(j)$ is the distance between coordinate~$j\in\cS$ and the next coordinate that is also in~$\cS$ within this cycle of~$\varphi$.
  Define
  \begin{equation}\label{eq:define-psi}
    \psi(j)=
    \begin{cases}
      j, &j\in\cP,\\
      \varphi^{u(j)}(j), &j\in\cS.
    \end{cases}
  \end{equation}
  We first verify that~$\psi$ is a permutation in the following lemma.
  \begin{lemma}\label{lemma:is-permutation}
    The map~$\psi$ is a permutation of~$[n]$.
  \end{lemma}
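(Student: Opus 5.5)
Since $[n]$ is finite, showing that $\psi$ is a permutation reduces to showing that $\psi$ maps $[n]$ into $[n]$ and is injective. This is the standard ``induced permutation'' (first-return map) argument: we restrict the cycle structure of $\varphi$ to the subset $\cS$.

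We first check that $\psi$ maps $\cP$ into $\cP$ and $\cS$ into $\cS$. On $\cP$ this is immediate, since $\psi$ is the identity there. For $j\in\cS$, the definition of the first-return time $u(j)$ gives $\varphi^{u(j)}(j)\in\cS$. The return time is finite because $\varphi$ is a permutation of a finite set. Hence $\varphi^{\mathrm{ord}(\varphi)}(j)=j\in\cS$, and the set in the definition of $u(j)$ is nonempty. It therefore suffices to show that the restriction $\psi|_{\cS}$ is injective. Injectivity across the two parts is automatic, because the images of $\cP$ and $\cS$ lie in the disjoint sets $\cP$ and $\cS$.

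For injectivity on $\cS$, suppose $i,j\in\cS$ satisfy $\psi(i)=\psi(j)=v$. Then $v=\varphi^{u(i)}(i)=\varphi^{u(j)}(j)$, so $i$ and $j$ lie in the same cycle of $\varphi$. Without loss of generality assume $u(i)\le u(j)$. Applying $\varphi^{-u(i)}$ to both expressions gives $i=\varphi^{u(j)-u(i)}(j)$.

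If $u(j)-u(i)>0$, then $i\in\cS$ is reached from $j$ after $u(j)-u(i)$ steps. Since $u(j)-u(i)<u(j)$, this contradicts the minimality of $u(j)$. Hence $u(i)=u(j)$, and therefore $i=j$.

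The argument has no real obstacle. The only points needing care are that $u(j)$ is finite and positive, and that exponents are compared within a single cycle. Both follow from $\varphi$ being a bijection on a finite set. Alternatively, one can describe each cycle of $\varphi$ meeting $\cS$ as a cyclic sequence, and note that $\psi$ acts on its $\cS$-elements as the cyclic successor. That map is clearly a bijection of those elements. Combining it with the identity on $\cP$ shows that $\psi$ is a permutation of $[n]$.
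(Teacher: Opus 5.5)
Your proof is correct and follows essentially the same argument as the paper: you reduce to injectivity on $\cS$, assume $u(i)\le u(j)$, derive $i=\varphi^{u(j)-u(i)}(j)\in\cS$, and contradict the minimality of $u(j)$ unless $u(i)=u(j)$. Your explicit checks that $u(j)$ is finite and that $\psi$ maps $\cP$ into $\cP$ and $\cS$ into $\cS$ fill in details the paper leaves implicit.
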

  \begin{proof}
    Assume that there exist distinct~$i,j\in[n]$ such that~$\psi(i)=\psi(j)$.
    Then,~$i,j\in\cS$ by the definition of~$\psi$.
    Without loss of generality, let~$u(i)\leq u(j)$, then
    \begin{equation*}
      \psi(i)=\varphi^{u(i)}(i)=\varphi^{u(j)}(j)= \psi(j)
      \Longrightarrow
      i =\varphi^{u(j)-u(i)}(j)\in\cS.
    \end{equation*}
    If~$u(i)=u(j)$, then~$i=j$, contradicting the assumption that~$i\neq j$.
    Otherwise,~$0<u(j)-u(i)<u(j)$, contradicting the definition of~$u(j)$ because~$i=\varphi^{u(j)-u(i)}(j)\in\cS$.
    Hence,~$\psi$ is injective and therefore a permutation of~$[n]$.
  \end{proof}

  Lemma~\ref{lemma:is-permutation} allows us to permute symbols in~$\bfc$ with~$\psi$, and leads to the following lemma.
  \begin{lemma}
    Let~$\bfd\in\Sigma_q^{r+m}$ be defined by permuting every symbol in~$\bfc$ using~$\psi$, i.e.,~$d_{\psi(j)}=c_j$.
    Then,~$\bfd=\bfc'$. 
  \end{lemma}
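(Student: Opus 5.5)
We must show $d_{\psi(j)}=c_j$ yields $\bfd=\bfc'$, i.e., $c'_{\psi(j)}=c_j$ for every $j\in[n]$. Since $\psi$ is a permutation by Lemma~\ref{lemma:is-permutation}, $\bfd$ is well defined, and it suffices to check this identity coordinatewise: then $d_{\psi(j)}=c'_{\psi(j)}$ for all $j$, and surjectivity of $\psi$ covers every coordinate of $\bfd$.

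We split into the two cases of the definition~\eqref{eq:define-psi}. For $j\in\cP=[r]$ we have $\psi(j)=j$, and~\eqref{eq:common-prefix} gives $c'_{j}=c_j$, as required.

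For $j\in\cS$, let $u=u(j)$ and consider the chain $j,\varphi(j),\varphi^2(j),\ldots,\varphi^{u}(j)=\psi(j)$. By minimality of $u(j)$, the intermediate points $\varphi^{l}(j)$ with $0<l<u$ all lie in $\cP$. We prove by induction on $l\in\{1,\ldots,u\}$ that $c'_{\varphi^{l}(j)}=c_j$. The base case $l=1$ is~\eqref{eq:j-phi-j}, namely $c'_{\varphi(j)}=c_j$. For the step, suppose $c'_{\varphi^{l}(j)}=c_j$ with $l<u$. Since $\varphi^{l}(j)\in\cP$, the prefix identity~\eqref{eq:common-prefix} gives $c_{\varphi^{l}(j)}=c'_{\varphi^{l}(j)}=c_j$. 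Applying~\eqref{eq:j-phi-j} at the coordinate $\varphi^{l}(j)$ then yields $c'_{\varphi^{l+1}(j)}=c_{\varphi^{l}(j)}=c_j$. Taking $l=u$ gives $c'_{\psi(j)}=c_j$.

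The only delicate point is the alternation between~\eqref{eq:j-phi-j} and~\eqref{eq:common-prefix} along the cycle, and it is valid only at prefix coordinates. This is exactly why the first-return time $u(j)$ is used: it stops the chain at its first return to $\cS$, so the prefix identity is never invoked at a suffix coordinate. Note also that no order-preservation property of $\varphi$ is needed here; that property presumably matters only later, when bounding the number of fragments induced by $\psi$.
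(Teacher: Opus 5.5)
Your proof is correct and follows essentially the same route as the paper. The prefix case uses $\psi(j)=j$ together with~\eqref{eq:common-prefix}, and the suffix case walks the first-return chain $j,\varphi(j),\ldots,\varphi^{u(j)}(j)=\psi(j)$, alternating~\eqref{eq:j-phi-j} and~\eqref{eq:common-prefix} at the intermediate prefix coordinates. Your explicit induction and the surjectivity remark simply formalize the paper's chain of equalities $c_j=\cdots=c'_{\psi(j)}$.
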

  \begin{proof}
    For~$j\in\cP$, this follows from~$\psi(j)=j$ and~\eqref{eq:common-prefix}.
    For~$j\in\cS$, by the definition of~$u(j)$, every intermediate value
    \begin{equation*}
      j_v=\varphi^v(j)\in\cP~\mbox{for}~1\leq v <u(j).
    \end{equation*}
    Therefore,
    \begin{equation*}
      c_{j_{v-1}}\overset{\eqref{eq:j-phi-j}}{=}c'_{j_v}\overset{\eqref{eq:common-prefix}}{=}c_{j_v},
    \end{equation*}
    and at termination,
    \begin{equation*}
      d_{\psi(j)}=c_j=\cdots = c_{j_{u(j)-1}}= c'_{j_{u(j)}}=c'_{\psi(j)}~\mbox{for}~j\in\cS.\qedhere
    \end{equation*}
  \end{proof}

  We now describe a procedure of constructing~$\bfc'=\bfd$ by permuting not the individual symbols, but fragments in~$\bfc$ with~$\psi$.
  Starting from prefix fragments, their positions remain unchanged and they partition the prefix region.
  Then, for every suffix fragment, we iteratively permute it using~$\varphi$.
  During each application of~$\varphi$, if the image in~$\cP$ crosses an internal boundary between two prefix fragments, split the fragment at the boundary and continue with the resulting pieces separately.
  The iteration continues until every piece reaches~$\cS$.

  Although the fragments are repeatedly being split during the procedure, the number of such splits is bounded.

  \begin{lemma}
    A boundary is used to split a fragment at most once.
  \end{lemma}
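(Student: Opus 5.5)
The claim is that each internal boundary between two consecutive prefix fragments can be responsible for splitting at most one piece during the whole procedure. I would reduce this to injectivity of $\varphi$ at the level of coordinates. A boundary is the gap between two adjacent coordinates $b,b+1\in\cP$ that lie in different prefix fragments. A piece is split at this boundary precisely when the current piece occupies, before applying $\varphi$, two coordinates $i,i+1$ of one fragment (adjacent within that fragment, so $\varphi(i+1)=\varphi(i)+1$ by the order-preserving property) such that $\varphi(i)=b$ and $\varphi(i+1)=b+1$. So every split at the boundary $(b,b+1)$ is witnessed by a pair of coordinates $(\varphi^{-1}(b),\varphi^{-1}(b+1))$.

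The key step is to show that the coordinates occupied by the pieces during the iteration never repeat. First I would note that every piece is traced symbol by symbol. A symbol starting at $j\in\cS$ visits $\varphi(j),\varphi^2(j),\ldots,\varphi^{u(j)}(j)$, and all the intermediate coordinates lie in $\cP$. Distinct starting points $j\neq j'$ in $\cS$ visit disjoint sets of intermediate coordinates. The reason is that each such orbit segment is a maximal run of $\cP$-coordinates inside a cycle of $\varphi$, lying between two consecutive $\cS$-visits. By the same argument as Lemma~\ref{lemma:is-permutation}, two such segments sharing a coordinate would have to coincide, forcing $j=j'$. Consequently, every coordinate $b\in\cP$ is reached by at most one symbol over the entire procedure, and it is reached at most once.

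Next, suppose the boundary $(b,b+1)$ is used to split a piece. Then at that moment one piece contains symbols landing at both $b$ and $b+1$. By the previous step, no other piece, at any time, lands a symbol on $b$ or on $b+1$. Hence any second split at this boundary would again require symbols landing at $b$ and $b+1$, which is impossible. This gives the lemma.

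The main obstacle is making the correspondence between splits and coordinate pairs precise. A piece that has already been split might seem to straddle the boundary again later, and the pieces move only as blocks, not as individual symbols. I would handle this by checking that, after a split, each resulting piece is still a contiguous run of coordinates within a single original fragment. Its image under the next application of $\varphi$ is then again contiguous, because $\varphi$ preserves order within a fragment. This justifies reasoning purely about which coordinates are hit, and the coordinate-disjointness argument then finishes the proof.
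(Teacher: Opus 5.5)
Your proposal is correct and follows essentially the same route as the paper. The paper also shows that the first-return paths $\mathrm{Path}(j)=\{\varphi^\ell(j)\mid 0<\ell<u(j)\}$ are pairwise disjoint (following a shared coordinate forward to $\cS$ contradicts the injectivity of $\psi$), and then notes that a boundary used twice would place the coordinate immediately to its left in two such paths; your additional points (a coordinate is visited at most once even within a single path, and each piece stays inside one fragment so $\varphi$ acts on it as a translation) only make explicit details the paper leaves implicit.
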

  \begin{proof}
    Define the first-return path of every coordinate~$j\in\cS$ as
    \begin{equation}
      \mathrm{Path}(j)=\left\{ \varphi^\ell(j)\mid 0<\ell<u(j)\right\}.
    \end{equation}
    We show that these paths are pairwise-disjoint.
    Assume for the sake of contradiction that there exist distinct coordinates~$i,j\in\cS$ and~$l\in\cP$ such that
    \begin{equation*}
      l\in \mathrm{Path}(i)\cap\mathrm{Path}(j)\neq\emptyset.
    \end{equation*}
    Then, applying~$\varphi$ to~$l$ until it reaches~$\cS$; the destination equals both~$\psi(i)$ and~$\psi(j)$ by definition in~\eqref{eq:define-psi}, which contradicts the fact that~$i\neq j$ due to the injectivity of~$\psi$.
    
    Finally, if the same internal boundary between prefix fragments is utilized twice, the coordinate on its immediate left would occur in two such paths, contradicting the pairwise-disjointness of paths.
  \end{proof}

  Since the procedure has introduced at most~$p-1$ splits to the suffix fragments in~$\bfc$, there are at most
  \begin{equation*}
    s+p-1 \overset{\eqref{eq:bound-fragments}}{\leq} t+2
  \end{equation*}
  fragments in the suffix region after the procedure; they form a multiset that partitions both~$\bfy$ and~$\bfw$.
  Therefore,~$\bfw$ can be obtained by breaking~$\bfy$ at most~$t+1$ times, i.e.,~$\bfw\in\cB_{t+1}(\bfy)$.

\printbibliography

\end{document}